\PassOptionsToPackage{hyphens}{url}
\documentclass[12pt]{article}
\usepackage[letterpaper,margin=1in]{geometry}
\usepackage{amsmath,amssymb,amsthm,mathtools}
\usepackage{natbib}
\usepackage{setspace}
\usepackage{booktabs}
\usepackage{graphicx}
\usepackage{enumitem}
\usepackage{microtype}
\usepackage{array}
\usepackage{xurl}
\usepackage[colorlinks=true,linkcolor=blue,citecolor=blue,urlcolor=blue,
  pdftitle={Optimal Grading: A Unified Approach},
  pdfkeywords={grading, contests, rank information, performance information, fixed prizes, ironing, PAVA, mechanism design}]{hyperref}

\allowdisplaybreaks
\newtheorem{theorem}{Theorem}

\newtheorem{lemma}{Lemma}
\newtheorem{corollary}{Corollary}
\theoremstyle{definition}
\newtheorem{definition}{Definition}

\theoremstyle{remark}

\newcommand{\E}{\mathbb E}
\newcommand{\R}{\mathbb R}
\newcommand{\one}{\mathbf 1}
\newcommand{\co}{\operatorname{co}}
\newcommand{\dd}{\,\mathrm d}
\newcommand{\Wcal}{\mathcal W}
\newcommand{\Pbar}{\overline P}
\newcommand{\phibar}{\overline\varphi}
\newcommand{\Qtilde}{\widetilde Q}
\newcommand{\Vrank}{V^{\mathrm R}}
\newcommand{\Vperf}{V^{\mathrm P}}
\newcommand{\para}[1]{\medskip\noindent\textbf{#1}}

\title{Optimal Grading: A Unified Approach}
\author{Bin Liu\thanks{School of Management and Economics and Shenzhen Finance Institute,
		The Chinese University of Hong Kong, Shenzhen (CUHK-Shenzhen); Email: binliu@cuhk.edu.cn}
	\and Jingfeng Lu\thanks{Department of Economics, National University of Singapore; Email: ecsljf@nus.edu.sg}}

\date{August 24, 2026}

\begin{document}
\maketitle

\begin{abstract}
We develop a unified approach to optimal grading in an all-pay contest in
which a designer assigns a fixed vector of heterogeneous prizes to maximize
expected total effort. The approach covers two information regimes and
identifies a common principle: iron locally misordered incentive returns and
assign prizes assortatively across the resulting grades. Under rank-only
grading, assignments depend only on ordinal ranks. Ironing cumulative rank
coefficients---via the least concave majorant or the pool-adjacent-violators
algorithm---determines which adjacent ranks are pooled and which prizes are
randomized within each grade. Under performance-contingent grading,
assignments may depend on numerical effort. The optimum irons virtual ability,
forms endogenous type grades, and assigns prize blocks assortatively across
grades.
A failing grade below a minimum passing effort and a collection of effort
brackets implement the direct optimum while preserving full prize assignment.
\end{abstract}

\section{Introduction}\label{sec:introduction}

Grades, status categories, promotion tiers, titles, and badges are widely used to motivate effort and communicate relative standing. These devices do more than summarize performance. By determining which contestants are distinguished and which are treated alike, they change what contestants compete for. A fine grading system makes small differences in rank consequential; a coarse system places several ranks in a common category. The choice between these designs matters in schools, firms, professional contests, and online communities, particularly when the rewards attached to the available positions are fixed in advance.

A pioneering and influential analysis of this problem is \citet{MoldovanuSelaShi2007}. Building on the incomplete-information contest framework of \citet{MoldovanuSela2001}, they study an organization in which contestants care about relative status and the principal chooses the number and sizes of status categories to maximize output. Their analysis provides an important demonstration that the optimal degree of differentiation is fundamentally shaped by the distribution of ability: the top category is always a singleton, while the remainder of the organization may be finely or coarsely partitioned. More recently, \citet{Goel2025} develops a complementary theory of grading contests in which grades reveal information about ability and thereby determine market rewards. His analysis highlights the subtle incentive effects of information disclosure: more informative grading intensifies competition, but stronger competition may either encourage or discourage effort depending on the distribution of abilities and the cost technology.

Taken together, these papers establish grading as an important incentive-design problem rather than a purely informational choice and provide powerful characterizations of optimal grading in important environments. A natural next step is to seek a unified characterization that applies to an arbitrary continuous ability distribution and an arbitrary fixed prize vector, together with a constructive procedure for recovering the optimal grading from these primitives. It is also natural to ask how the solution changes when the designer can condition rewards on numerical performance rather than on ranks alone. To our knowledge, a common constructive characterization that addresses these questions has not yet been available.

This paper studies the problem in a unified fixed-prize framework. A designer has a weakly decreasing vector of prizes $\boldsymbol{v}=(v_1,\ldots,v_N)$ and allocates these prizes among contestants with privately known abilities and linear effort costs. Her goal is to maximize the expected total effort. The designer cannot change the prize values; she chooses how the fixed prizes are assigned. This formulation nests the pure-status problem of \citet{MoldovanuSelaShi2007}, after a harmless common shift of the singleton status values, and the linear-cost grading problem of \citet{Goel2025}, where the fixed prizes are the market values associated with exact ranks. It also applies directly to promotion slots, medals, tournament seats, and other settings in which rewards are committed in kind before the assignment rule is designed.

We distinguish two information regimes. Under \emph{rank-only grading}, the
assignment depends only on the ordinal ranking of efforts. The designer may
pool consecutive ranks into a common grade and randomize the corresponding
prizes within that grade, but she cannot use an absolute performance standard.
Under \emph{performance-contingent grading}, the numerical effort profile is
observable. The designer may then define a failing grade below a minimum
passing effort, treat an interval of effort levels as equivalent, or condition
the assignment more generally on absolute performance. Exclusion is not
allowed in either regime: every contestant receives one prize and all $N$
prizes are assigned. The second regime contains the first but supports
additional grading instruments that are unavailable when only ranks are
observed.

Our first set of results provides a complete characterization of the rank-only
optimum. Each rank has
a coefficient measuring the marginal contribution of prize value at that rank
to expected total effort. If these coefficients already decrease from the best
rank to the worst, the finest grading is optimal: rank $j$ receives the
$j$th-highest prize. When the coefficients are out of order, fine ranking
places too much incentive weight on some lower ranks relative to neighboring
higher ranks. The optimal response is to pool adjacent ranks until the average
coefficient of each grade decreases down the ranking. We formalize this idea
through a \emph{least-concave-majorant} construction and the
pool-adjacent-violators algorithm. The procedure is constructive, works for
every continuous ability distribution, and determines a canonical optimal
grading partition independently of the numerical levels of the fixed prizes.

Our second set of results solves the performance-contingent problem. Directly
optimizing over an equilibrium effort function is difficult because an optimal
rule may create flat regions, jumps, and unused effort intervals. We therefore
use a mechanism-design approach. The optimal direct mechanism irons virtual
ability: types for which finer distinctions do not generate sufficient
incentives are placed in a common performance grade, while types outside those
regions are separated. Better grades receive better blocks of the fixed
prizes, and the prizes within a grade are assigned uniformly. Because exclusion
is unavailable, a contestant can always exert zero effort and obtain at least
the lowest prize $v_N$. The optimal mechanism therefore leaves the lowest type
with utility $v_N$ and uses a failing grade below the minimum passing effort rather
than a reserve. We then implement the direct solution by a graded all-pay
contest with effort brackets.

Despite their different informational constraints, the two regimes share a
common analytical structure. Expected total effort is written as a linear
functional of the interim prize assignment; the fixed prize vector restricts
the feasible assignments; locally misordered incentive returns are ironed;
and prizes are assigned assortatively across the resulting blocks. What
differs is the object being ironed: rank coefficients under rank-only
information and virtual ability under performance information. Thus, the same
economic principle underlies both solutions. Grading should be fine where
additional distinctions create well-ordered incentive gains and coarse where
those gains are locally misordered. More information is therefore not
synonymous with finer grading. Numerical performance permits the designer to
define absolute performance categories, not to exclude contestants or
withhold prizes.

The results offer two practical lessons. First, a designer should not choose
the number of grades by convention alone: the appropriate boundaries depend
on the distribution of ability through the marginal effort returns of
neighboring ranks. Second, when absolute quality matters, a purely relative
system may be too restrictive because it rewards the same ranking whether
overall performance is high or low. A failing grade and effort brackets can
use absolute performance while preserving full prize assignment.

\paragraph{Related literature.}

Our paper belongs to the literature on prize design in all-pay contests. \citet{MoldovanuSela2001} study the optimal allocation of a prize budget across rewards, and \citet{MoldovanuSela2006} study contest architecture. With fixed heterogeneous prizes, \citet{LiuLu2014} show under regularity that assigning larger prizes to higher ranks maximizes effort. Our designer cannot change prize values; she chooses how a fixed vector is assigned. We remove regularity and separate the consequences of having only rank information from those of observing absolute performance. Related work studies entry fees, punishments, and negative prizes \citep{HammondEtAl2019,LiuLuWangZhang2018}, and optimal contests with nonlinear costs \citep{Zhang2024}.

Our paper also relates to grading and status.
\citet{MoldovanuSelaShi2007} study contests in which rewards are endogenous
status categories. \citet{DubeyGeanakoplos2010} show that coarse grades can
outperform fully revealing scores in an environment with status concerns.
\citet{Goel2025} studies optimal grading when a grade reveals information about
ability to a market. Our rank-only regime provides
a fixed-prize representation that nests the pure-status categories of
\citet{MoldovanuSelaShi2007} and the rank-value coarsenings of
\citet{Goel2025} with linear costs. Our performance regime instead permits grades to be defined
directly in absolute effort space while retaining full prize assignment.

Our techniques draw on the literature on ironing in mechanism design. \citet{Myerson1981} introduced ironing of virtual values, and \citet{Toikka2011} develops a general approach to ironing objectives without direct control of the underlying statistic. Reduced-form feasibility originates with \citet{Matthews1984} and \citet{Border1991}. The majorization and extreme-point methods of \citet{KleinerMoldovanuStrack2021} provide a broad geometric framework for such problems. Here majorization is generated by a fixed prize vector rather than a scalar budget, and the economic object being ironed is effort efficiency.

The rest of the paper proceeds as follows. Section~\ref{sec:model} presents the model and explains how the
status and grading models above are nested. Section~\ref{sec:rank} solves the
rank-only problem. Section~\ref{sec:performance} solves and implements
performance-contingent grading. Section~\ref{sec:conclusion} concludes. All
proofs are in the Appendix, except where noted.

\section{Model}\label{sec:model}

\subsection{Contestants, abilities, and fixed prizes}

There are $N\geq 2$ risk-neutral contestants. Contestant $i$ has ability
$t_i\in[a,b]$, where $a>0$. Abilities are independently and identically
distributed according to a continuously differentiable distribution function
$F$ with strictly positive density $f$ on $[a,b]$. A contestant of ability
$t_i$ who exerts effort $e_i\geq 0$ incurs effort cost $e_i/t_i$. Thus, if the
contestant receives an expected prize of value $w$, her expected payoff is
$w-e_i/t_i$.

The organizer has a fixed collection of $N$ nonnegative prizes
$\boldsymbol{v}=(v_1,\ldots,v_N)$, ordered so that
$v_1\geq v_2\geq\cdots\geq v_N\geq0$. We assume that at least one of the first
$N-1$ inequalities is strict; otherwise all prizes are identical and the
prize-design problem is trivial. The organizer chooses how these fixed prizes
are assigned among contestants so as to maximize expected total effort.

To describe rank-based prize assignments, let $\Pi(\boldsymbol{v})$ denote the set of all
permutations of $\boldsymbol{v}$. A \emph{deterministic rank-prize rule} is a vector
$\widetilde{\boldsymbol{w}}
=(\widetilde w_1,\ldots,\widetilde w_N)\in\Pi(\boldsymbol{v})$, where
$\widetilde w_j$ is the prize awarded to the contestant with the $j$th-highest
effort. Thus, a deterministic rank-prize rule simply specifies which of the fixed prizes
is attached to each rank. Ties in effort are broken uniformly at random.

The organizer may also randomize over deterministic rank-prize rules. A
\emph{stochastic rank-prize rule} is a probability distribution over
$\Pi(\boldsymbol{v})$ to which the organizer commits before contestants choose
their efforts. After efforts are exerted and ranks are determined, a
deterministic assignment is drawn according to this distribution and the
corresponding prizes are awarded. Since contestants are risk neutral, only the
expected prize associated with each rank matters for incentives. Hence a
stochastic rank-prize rule can be summarized by its expected rank-prize vector
$\boldsymbol{w}=(w_1,\ldots,w_N)$, where
$w_j=\E[\widetilde w_j]$ is the expected prize received by rank $j$.

Throughout, vectors are written as row vectors, and the superscript $T$ denotes
transpose. The set of expected rank-prize vectors that can be generated by
stochastic rank-prize rules is the \emph{prize permutahedron}
\[
\Wcal(\boldsymbol{v})
:=\co\Pi(\boldsymbol{v})
=
\left\{
\sum_{\boldsymbol{z}\in\Pi(\boldsymbol{v})}
\lambda_{\boldsymbol{z}}\boldsymbol{z}
:
\lambda_{\boldsymbol{z}}\geq0
\text{ for all }\boldsymbol{z}\in\Pi(\boldsymbol{v}),
\quad
\sum_{\boldsymbol{z}\in\Pi(\boldsymbol{v})}
\lambda_{\boldsymbol{z}}=1
\right\}.
\]
Thus, every $\boldsymbol{w}\in\Wcal(\boldsymbol{v})$ is a convex combination
of deterministic assignments of the fixed prize vector
$\boldsymbol{v}$. Conversely, every such convex combination can be implemented
by an appropriate lottery over deterministic rank-prize rules.

\subsection{Regime R: rank-only information}

In the rank-only regime, the prize assignment rule can depend only on the
ordinal ranking of efforts. Let
$\boldsymbol{w}=(w_1,\ldots,w_N)$ denote the expected rank-prize vector, where
$w_j$ is the expected prize received by the contestant at rank $j$. We require
higher ranks to receive weakly higher expected prizes, so that
\begin{equation}\label{eq:def_w_arrow}
	\boldsymbol{w}\in\Wcal^{\downarrow}(\boldsymbol{v})
	:=
	\left\{
	\boldsymbol{w}\in\Wcal(\boldsymbol{v})
	:
	w_1\geq w_2\geq\cdots\geq w_N
	\right\}.
\end{equation}
Thus, a higher-ranked contestant receives a weakly higher expected prize.

The monotonicity requirement is imposed on the expected rank-prize vector
$\boldsymbol{w}$. An individual deterministic prize assignment in the support
of the underlying lottery need not itself be monotone in rank.

Throughout Regime R, we evaluate a nonconstant rank-prize rule at its symmetric
equilibrium in strictly increasing strategies and a constant rule at the
zero-effort equilibrium. Lemma~\ref{lem:rankequilibrium} establishes uniqueness
within the former class. All rank-only value and design statements below refer
to this equilibrium convention.

\para{Rank grading and adjacent coarsening.}
Such prize allocation rules are closely related to the status/grading designs
considered in the literature \citep{MoldovanuSelaShi2007,Goel2025}. A
\emph{fixed rank grading} chooses one partition of the ranks
$\{1,\ldots,N\}$ into consecutive blocks. Specifically, for some
$0=k_0<k_1<\cdots<k_M=N$, the blocks are
$\mathcal B_m=\{k_{m-1}+1,\ldots,k_m\}$, $m=1,\ldots,M$. Under the finest partition,
each rank forms a singleton block and rank $j$ receives prize $v_j$. If
$\mathcal B_m=\{\ell,\ldots,r\}$ is a non-singleton block, these ranks are treated as
equivalent and the prizes $v_\ell,\ldots,v_r$ are assigned uniformly at random
among the contestants occupying them. Hence every rank in the block receives
expected prize
$(r-\ell+1)^{-1}\sum_{j=\ell}^r v_j$. Thus, a fixed rank grading preserves the fixed prize vector $\boldsymbol{v}$ but coarsens the ranking by pooling consecutive ranks and randomizing the corresponding prizes within each block. The word ``fixed'' refers to the
partition; within-block randomization remains part of the grading rule. A
\emph{randomized rank grading} is a lottery over fixed rank-grading
partitions. Lemma~\ref{lem:orderedpolytope} below shows that randomized rank
gradings generate exactly $\Wcal^{\downarrow}(\boldsymbol{v})$.

This interpretation shows that the rank-only regime nests two prominent models of grading and status contests.

\para{Contests for status.}
Consider the pure-status component of \citet{MoldovanuSelaShi2007}. Index the singleton ranks from best to worst and let $\boldsymbol{s}=(s_1,\ldots,s_N)$, where $s_j=N+1-2j$. Thus, $s_j$ is the status payoff of rank $j$ when every rank forms a separate category. If ranks $\{\ell,\ldots,r\}$ are pooled into one category, its status payoff in \citet{MoldovanuSelaShi2007} is $(N-r)-(\ell-1)=N+1-\ell-r$. This is exactly the average of the corresponding singleton status payoffs, since $(r-\ell+1)^{-1}\sum_{j=\ell}^r s_j=N+1-\ell-r$. Hence a status partition in \citet{MoldovanuSelaShi2007} is precisely a fixed rank grading of the fixed singleton-status vector $\boldsymbol{s}$.

To satisfy our maintained nonnegativity restriction, define instead $v_j=s_j+N-1=2(N-j)\geq0$. Adding the same constant to every rank payoff does not affect effort incentives under rank-only information, since it leaves all prize differences unchanged. Therefore, the pure-status partition problem of \citet{MoldovanuSelaShi2007} is a special case of Regime R.

\para{Optimal grading contests.}
The model of \citet{Goel2025} is also nested by Regime R. In \citet{Goel2025}, a contestant with marginal effort cost $\theta$ incurs cost $\theta e$; setting $t=1/\theta$ gives the cost $e/t$ used in our model. Under full rank revelation, let $\boldsymbol{v}^*=(v_1^*,\ldots,v_N^*)$ denote the fixed vector of market values associated with exact ranks. If consecutive ranks $\{\ell,\ldots,r\}$ receive the same grade, Bayes plausibility implies that the grade value is $(r-\ell+1)^{-1}\sum_{j=\ell}^r v_j^*$. Thus, a deterministic grading scheme in \citet{Goel2025} is exactly a fixed rank grading of $\boldsymbol{v}^*$: consecutive ranks are partitioned into blocks and each block is assigned the average of its constituent full-rank values. Restricting Regime R to fixed rank gradings therefore recovers the linear-cost grading problem studied in \citet{Goel2025}.

\subsection{Regime P: performance information}

In practice, the designer may be able to condition the prize assignment rule
on contestants' absolute performance rather than only on their relative
ranks. Rank information discards cardinal information: two effort profiles can
generate exactly the same ranking even though the underlying performance
levels are very different. When numerical effort is observable, the designer
may define a failing grade below a minimum passing standard, pool effort levels
within prescribed intervals, and condition the assignment more generally on
the entire effort profile.

Formally, in the performance-contingent regime, the designer observes the
numerical effort profile $\boldsymbol{e}=(e_1,\ldots,e_N)$. For every profile,
the assignment rule selects a possibly randomized bijection between
contestants and the $N$ fixed prizes. Thus, each contestant receives exactly
one prize and every prize is used exactly once. Contestants below a specified
passing standard may be placed in a common bottom grade, but they remain
eligible for the corresponding bottom prize block. Regime P therefore permits
absolute performance categories and effort brackets, but not reserves,
exclusion, or prize withholding.

\subsection{Quantile notation and rank probabilities}

Let
$q(x)=F^{-1}(x)$ for $x\in[0,1]$,
and define virtual ability
\begin{equation*}
\varphi(x)
=q(x)-(1-x)q'(x)
=J(q(x)),
\qquad
J(t)=t-\frac{1-F(t)}{f(t)}.
\end{equation*}
For rank $j=1,\ldots,N$, define
\begin{equation*}
p_j(x)=\binom{N-1}{j-1}x^{N-j}(1-x)^{j-1}.
\end{equation*}
If all rivals use a strictly increasing effort strategy, a contestant at quantile $x$ occupies rank $j$ with probability $p_j(x)$. We have
$
\sum_{j=1}^{N}p_j(x)=1$ and $
\int_0^1p_j(x)\dd x=\frac1N.
$

For $\boldsymbol{w}\in\Wcal^{\downarrow}(\boldsymbol{v})$, the interim expected prize is
\begin{equation*}
P(x;\boldsymbol{w})=\sum_{j=1}^{N}w_jp_j(x).
\end{equation*}
The fully \emph{assortative} rule $\boldsymbol{w}=\boldsymbol{v}$, i.e., assigning prizes monotonically according to rank, induces
\begin{equation}\label{eq:Pbar}
\Pbar(x)=\sum_{j=1}^{N}v_jp_j(x).
\end{equation}
The derivative of $P(x;\boldsymbol{w})$ with respect to $x$ is
\begin{equation}\label{eq:Pprime}
P'(x;\boldsymbol{w})
=(N-1)\sum_{j=1}^{N-1}(w_j-w_{j+1})
\binom{N-2}{j-1}x^{N-j-1}(1-x)^{j-1}\ge0.
\end{equation}
It is strictly positive on $(0,1)$ whenever $\boldsymbol{w}$ is nonconstant (i.e., the $N$ coordinates in $\boldsymbol{w}$ are not all equal). The same formula applies to $\Pbar$ with $\boldsymbol{w}=\boldsymbol{v}$.

\section{Optimal grading with rank-only information}\label{sec:rank}

\subsection{Equilibrium effort and the rank coefficients}

\begin{lemma}[Rank-contest reduction]\label{lem:rankequilibrium}
Fix $\boldsymbol{w}=(w_1,\ldots,w_N)\in\Wcal^{\downarrow}(\boldsymbol{v})$. If $\boldsymbol{w}$ is nonconstant, the unique symmetric equilibrium within the class of strictly increasing strategies is
\begin{equation}\label{eq:rankeffort}
e_{\boldsymbol{w}}(t)
=tP(F(t);\boldsymbol{w})-\int_a^tP(F(s);\boldsymbol{w})\dd s-aw_N.
\end{equation}
The expected total effort is
\begin{equation}\label{eq:rankobjective}
\Vrank(\boldsymbol{w})
=N\int_0^1\varphi(x)P(x;\boldsymbol{w})\dd x-Na w_N
=\boldsymbol{\beta}\boldsymbol{w}^{T},
\end{equation}
where $\boldsymbol{\beta}=(\beta_1,\ldots,\beta_N)$ is defined componentwise by
\begin{equation}\label{eq:beta}
\beta_j
=N\int_0^1\varphi(x)p_j(x)\dd x
-Na\one\{j=N\}.
\end{equation}
Moreover,
\begin{equation}\label{eq:betasum}
	\sum_{j=1}^{N}\beta_j=0.
\end{equation}
If $\boldsymbol{w}$ is constant (i.e., $w_1=\cdots=w_N$), all types exert zero effort, and \eqref{eq:rankobjective} remains valid.
\end{lemma}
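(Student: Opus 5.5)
We prove the lemma with the standard envelope/mechanism argument. Along the way we pin down the boundary condition $U(a)=w_N$, which is forced because the lowest type ranks last for sure and zero effort must be optimal for it.

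\emph{Step 1 (necessity).} Suppose rivals use a strictly increasing strategy $e(\cdot)$. A type-$t$ contestant who bids $e(s)$ for some $s\in[a,b]$ ranks above each rival with probability $F(s)$. Ties occur with probability zero, so the expected prize is $P(F(s);\boldsymbol w)$ and the payoff is $P(F(s);\boldsymbol w)-e(s)/t$.

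Let $U(t)=P(F(t))-e(t)/t$. It is a supremum of functions affine in $1/t$. Reparametrizing by $\tau=1/t$ (or applying the envelope theorem of Milgrom--Segal), we get $U'(t)=e(t)/t^2$ a.e. It is convenient to work with $tU(t)=tP(F(t))-e(t)$. Incentive compatibility gives
\[
e(t)=tP(F(t))-\int_a^t P(F(s))\dd s-\bigl(aP(F(a))-e(a)\bigr).
\]
The simplest way to see this is the Myerson payment identity: treat $e$ as a payment scaled by $t$, with allocation $P(F(t))$ increasing in $t$.

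Next we pin down $e(a)$. Since $F(a)=0$, type $a$ ranks last and gets $w_N$. Any positive effort by $a$ strictly lowers its payoff, and deviations to $e\in(0,e(a))$ are unavailable below the bottom only if $e(a)>0$; in that case bidding $0$ also yields rank $N$ with probability one and is strictly better. Hence $e(a)=0$, which gives \eqref{eq:rankeffort}.

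\emph{Step 2 (sufficiency and monotonicity).} Because $P'(x;\boldsymbol w)>0$ on $(0,1)$ for nonconstant $\boldsymbol w$ by \eqref{eq:Pprime}, we have $e'(t)=tP'(F(t))f(t)>0$, so $e$ is strictly increasing. A type-$t$ deviator to $e(s)$ obtains
\[
t\bigl[P(F(s))-P(F(t))\bigr]-\int_t^s P(F(u))\dd u\le 0,
\]
by monotonicity of $P$, after multiplying payoffs by $t$. Bids outside the range of $e$ are dominated: bids above $e(b)$ cost more with no gain, and bids in $(0,e(a))$ do not arise since $e(a)=0$. Uniqueness within the class follows from Step 1.

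\emph{Step 3 (objective).} Take $\E[e(t)]$ with $t=q(x)$. Write $\int_a^b\int_a^t P(F(s))\dd s\,f(t)\dd t=\int_a^b P(F(s))(1-F(s))\dd s$, then change variables $s=q(x)$, $\dd s=q'(x)\dd x$. This gives
\[
\E e=\int_0^1\bigl[q(x)-(1-x)q'(x)\bigr]P(x)\dd x-aw_N .
\]
Multiplying by $N$ yields \eqref{eq:rankobjective}, and linearity of $P$ in $\boldsymbol w$ gives $\boldsymbol\beta\boldsymbol w^T$ with \eqref{eq:beta}. For \eqref{eq:betasum}, use $\sum_j p_j=1$ to get $\sum_j\beta_j=N\int_0^1\varphi\dd x-Na$. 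Then integrate by parts: $\int_0^1(1-x)q'(x)\dd x=-q(0)+\int_0^1 q\dd x=-a+\int_0^1 q\dd x$. Hence $\int_0^1\varphi\dd x=a$, and the sum is zero.

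\emph{Step 4 (constant case).} If $\boldsymbol w$ is constant, effort yields no benefit, so every type exerts zero effort. The formula also gives $\boldsymbol\beta\boldsymbol w^T=w_1\sum_j\beta_j=0$, so \eqref{eq:rankobjective} remains valid.

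The main obstacle is Step 1: justifying the envelope representation for an arbitrary strictly increasing strategy without assuming differentiability, and rigorously pinning down $e(a)=0$. We would handle this with the standard Myerson monotone-allocation argument applied to the scaled utility $tU(t)$, which holds for any increasing allocation.
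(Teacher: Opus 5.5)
Your proof follows the same route as the paper's: the envelope representation of effort, $e(a)=0$ because the bottom type can bid zero and still get $w_N$, a direct global check of deviations, Fubini plus the change of variables $t=q(x)$, and $\int_0^1\varphi=a$ via integrating $(1-x)q'(x)$ by parts. Your derivation of the envelope formula from convexity of $U$ in $1/t$ is, if anything, more careful than the paper's first-order condition $e'(t)=t\Qtilde'(t)$. It does not presuppose that $e$ is differentiable, so it actually delivers the uniqueness claim within strictly increasing strategies.

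There is, however, an algebra error in Step 2, and the displayed inequality is false as written. From the effort formula,
\[
e(s)-e(t)=sP(F(s))-tP(F(t))-\int_t^sP(F(u))\dd u,
\]
so $t$ times the deviation gain is
\[
t\bigl[P(F(s))-P(F(t))\bigr]-\bigl[e(s)-e(t)\bigr]=\int_t^s\bigl[P(F(u))-P(F(s))\bigr]\dd u .
\]
This is $\le 0$ for both $s>t$ and $s<t$ by monotonicity of $P$. It is exactly the paper's identity $U(t)-(\Qtilde(s)-e(s)/t)=t^{-1}\int_s^t(\Qtilde(z)-\Qtilde(s))\dd z$.

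Your expression $t[P(F(s))-P(F(t))]-\int_t^sP(F(u))\dd u$ is not nonpositive in general. Take $N=3$, $\boldsymbol{w}=\boldsymbol{v}=(1,0,0)$ (so $P(x)=x^2$), $F$ uniform on $[1,2]$, $t=1$, and $s=2$. Your expression equals $2/3$, whereas the true gain is $-2/3$: indeed $e(2)=5/3$, so type $1$ gets $1-5/3$ instead of $0$.

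With that line corrected, the proof goes through. Note also that $e$ is continuous with $e(a)=0$, so its range is all of $[0,e(b)]$ and only bids above $e(b)$ need separate treatment. Your sentence about bids in $(0,e(a))$ is garbled, but the intended argument (bid $0$ instead) is the paper's.
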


The rank-only design problem is therefore the finite linear program
\begin{equation}\label{eq:rankproblem}
	\Vrank
	=
	\max_{\boldsymbol{w}\in\Wcal^{\downarrow}(\boldsymbol{v})}
	\boldsymbol{\beta}\boldsymbol{w}^{T}
	=
	\max_{\boldsymbol{w}\in\Wcal^{\downarrow}(\boldsymbol{v})}
	\sum_{j=1}^{N}\beta_j w_j.
\end{equation}
The coefficient $\beta_j$ is thus the marginal contribution to expected total effort of an additional unit of expected prize value at rank $j$.

\subsection{The ordered prize permutahedron}

We begin by providing an equivalent characterization of the set of expected prize vectors $\Wcal^{\downarrow}(\boldsymbol{v})$, defined in \eqref{eq:def_w_arrow}, that the designer can choose. Clearly, $\Wcal^{\downarrow}(\boldsymbol{v})$ is a convex set. For $\boldsymbol{w}\in\R^N$, let
\[
S_k(\boldsymbol{w})=\sum_{j=1}^kw_j,
\qquad
V_k=\sum_{j=1}^kv_j
\]
denote the cumulative sums of the first $k$ prizes in $\boldsymbol{w}$ and
$\boldsymbol{v}$, respectively.

It is useful to define the expected prize vectors induced by fixed rank
gradings; these vectors will appear as the extreme points of
$\Wcal^{\downarrow}(\boldsymbol{v})$.

\begin{definition}[Adjacent coarsening of $\boldsymbol{v}$]\label{def:def_of_adjacent_coarsening}
	We say that $\boldsymbol{w}$ is an \emph{adjacent coarsening} of $\boldsymbol{v}$ if there exists a consecutive partition $0=k_0<k_1<\cdots<k_M=N$ such that, for each block $\mathcal B_m=\{k_{m-1}+1,\ldots,k_m\}$, all ranks in $\mathcal B_m$ receive the same expected prize, equal to the average of the original prizes in that block. That is, for every $j\in\mathcal B_m$,
	$
	w_j=\frac{1}{k_m-k_{m-1}}\sum_{r=k_{m-1}+1}^{k_m}v_r.
	$
\end{definition}
Equivalently, an adjacent coarsening is the expected prize vector induced by a
fixed rank-grading partition. The following result provides an equivalent
characterization of $\Wcal^{\downarrow}(\boldsymbol{v})$ and shows that its
extreme points are exactly the adjacent coarsenings of $\boldsymbol{v}$.

\begin{lemma}[Ordered prize permutahedron]\label{lem:orderedpolytope}
	A vector $\boldsymbol{w}=(w_1,\ldots,w_N)$ belongs to $\Wcal^{\downarrow}(\boldsymbol{v})$ if and only if
	\begin{equation}\label{eq:majorizationfinite}
		w_1\ge\cdots\ge w_N,
		\qquad
		S_k(\boldsymbol{w})\le V_k\quad(k=1,\ldots,N-1),
		\qquad
		S_N(\boldsymbol{w})=V_N.
	\end{equation}
	Moreover, the extreme points of $\Wcal^{\downarrow}(\boldsymbol{v})$ are exactly the adjacent coarsenings of $\boldsymbol{v}$. Consequently, every feasible expected prize vector can be generated by a lottery over fixed rank gradings.
\end{lemma}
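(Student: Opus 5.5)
The proof has three parts. First we identify $\Wcal^{\downarrow}(\boldsymbol{v})$ with the polytope cut out by \eqref{eq:majorizationfinite}. Then we find that polytope's extreme points by a tightness argument. The final claim about lotteries follows at once.

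\textbf{Characterization.} We invoke the Rado--Hardy--Littlewood--P\'olya theorem. Since $\boldsymbol{v}$ is sorted decreasingly, $\co\Pi(\boldsymbol{v})$ is exactly the set of vectors $\boldsymbol{w}$ majorized by $\boldsymbol{v}$. That is, the decreasing rearrangement $\boldsymbol{w}^{\downarrow}$ of $\boldsymbol{w}$ satisfies $S_k(\boldsymbol{w}^{\downarrow})\le V_k$ for $k<N$, with equality at $k=N$. Intersecting with the monotonicity constraint gives $\boldsymbol{w}^{\downarrow}=\boldsymbol{w}$, which yields \eqref{eq:majorizationfinite}. If we prefer not to quote Rado, the inclusion of $\Wcal^{\downarrow}(\boldsymbol{v})$ in the set \eqref{eq:majorizationfinite} is immediate. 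Any permutation $\boldsymbol{z}$ of $\boldsymbol{v}$ satisfies $S_k(\boldsymbol{z})\le V_k$, because the first $k$ entries of $\boldsymbol{z}$ are $k$ of the prizes and $V_k$ sums the $k$ largest. These inequalities survive convex combination. The reverse inclusion will then follow from the extreme-point step below: every point of \eqref{eq:majorizationfinite} is a convex combination of adjacent coarsenings, and each adjacent coarsening is an average of permutations of $\boldsymbol{v}$ (uniform shuffles within blocks), so it lies in $\Wcal(\boldsymbol{v})$.

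\textbf{Extreme points.} Let $K$ denote the polytope \eqref{eq:majorizationfinite} in $\R^N$. At a point $\boldsymbol{w}\in K$, call $k\in\{1,\ldots,N-1\}$ \emph{tight} if $S_k(\boldsymbol{w})=V_k$, and call it \emph{flat} if $w_k=w_{k+1}$.

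We first show that every adjacent coarsening lies in $K$ and is a vertex. Let $\boldsymbol{w}$ be the coarsening for the partition $k_0<\cdots<k_M$.
\begin{itemize}[leftmargin=*]
\item \emph{Monotonicity.} Block averages of a decreasing sequence are decreasing, so $w_1\ge\cdots\ge w_N$.
\item \emph{Majorization.} Inside a block, the partial sums of a constant vector are dominated by the partial sums of the decreasing block entries. Equality holds at each block boundary $k_m$.
\item \emph{Vertex.} Every boundary $k_m$ is tight and every interior index of a block is flat. These $N-1$ equalities, together with $S_N=V_N$, are linearly independent and determine $\boldsymbol{w}$ uniquely, so $\boldsymbol{w}$ is a vertex.
\end{itemize}

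Conversely, let $\boldsymbol{w}$ be a vertex of $K$. We must show that every index $k\le N-1$ is tight or flat, and that tight indices carry correct block averages. Suppose some index is neither tight nor flat. Let $k_1<k_2<\cdots$ be the tight indices (together with $N$), which form the block boundaries. Then some block $\{\ell,\ldots,r\}$ contains an interior index $k$ that is not flat, so $w$ is nonconstant on that block.

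In that case we perturb. Add $\varepsilon$ to $w_\ell,\ldots,w_k$ and subtract $\varepsilon\,(k-\ell+1)/(r-k)$ from $w_{k+1},\ldots,w_r$. This preserves the block sum, keeps monotonicity for small $|\varepsilon|$ of either sign (since $w_k>w_{k+1}$ and the perturbation is constant on each side), and keeps the strictly slack partial-sum constraints slack. The constraints at tight boundaries remain tight because block sums are unchanged.

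This perturbation is the step I expect to be delicate. Monotonicity at the block endpoints $\ell-1$ and $r$ must be preserved, and it can fail when $w_{\ell-1}=w_\ell$. I would handle this by perturbing along a direction constant on maximal flat runs and mean-zero on the block. Within the block there are at least two distinct flat runs, so such a direction exists. If $w_{\ell-1}=w_\ell$ across a tight boundary, the adjacent block is treated consistently by choosing the direction supported only on the runs strictly inside $\{\ell,\ldots,r\}$, with the extreme runs touching the boundary moved only in the direction that is allowed.

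Once no such index exists, every block is constant. Its sum equals $V_{k_m}-V_{k_{m-1}}$ by tightness at both ends, so $\boldsymbol{w}$ is the adjacent coarsening for that partition. (As a cleaner fallback for the delicate step: a vertex in $\R^N$ needs $N$ linearly independent active constraints, and any index that is neither tight nor flat removes one; counting shows all $N-1$ indices must be tight or flat.)

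\textbf{Consequence.} Since $K$ is a polytope, every feasible vector is a convex combination of its vertices by Minkowski's theorem. The vertices are adjacent coarsenings, each generated by a fixed rank grading. So every feasible vector is generated by a lottery over fixed rank gradings, which also completes the reverse inclusion needed in the characterization step.
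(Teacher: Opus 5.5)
\textbf{There is a genuine gap} in the direction ``every vertex of $K$ is an adjacent coarsening,'' exactly at the step you flag as delicate.

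Your proposed repair does not work as written, for two reasons. First, a block $\{\ell,\ldots,r\}$ may consist of only two maximal flat runs, both touching the block's ends, so there are no runs ``strictly inside'' to carry the direction. Second, moving an extreme run ``only in the direction that is allowed'' gives $\boldsymbol{w}+\varepsilon\boldsymbol{d}\in K$ but not $\boldsymbol{w}-\varepsilon\boldsymbol{d}\in K$, and a one-sided move does not contradict extremality. The counting fallback is also incomplete. When $\boldsymbol{v}$ has ties, an index can be simultaneously tight and flat and so contribute two active constraints. A raw count of active constraints therefore does not force every index to be tight or flat; you would need a rank statement, not a count.

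The missing idea is that the configuration you fear cannot occur. Use the convention $S_0(\boldsymbol{w})=V_0=0$. If $\ell-1\ge1$ is tight and $\ell$ is slack, then
$w_{\ell-1}=S_{\ell-1}(\boldsymbol{w})-S_{\ell-2}(\boldsymbol{w})\ge V_{\ell-1}-V_{\ell-2}=v_{\ell-1}\ge v_\ell>S_\ell(\boldsymbol{w})-S_{\ell-1}(\boldsymbol{w})=w_\ell$.
Symmetrically, if $r\le N-1$ is tight and $r-1$ is slack, then $w_r>v_r\ge v_{r+1}\ge w_{r+1}$.

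In your setting the block has at least two ranks and its only tight indices are its endpoints. So $\ell$ and $r-1$ are slack, and both outer boundaries are automatically strict. With this, your original perturbation works for both signs of small $\varepsilon$, and the argument closes. That perturbation adds $\varepsilon$ on $\ell,\ldots,k$ and subtracts the compensating constant on $k+1,\ldots,r$. This strictness fact is exactly the ``outer boundaries are strict'' step in the paper's proof. The paper otherwise perturbs the first and last maximal constant runs rather than splitting at a non-flat index; the two perturbations are equally valid.

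The same inequalities also repair your counting fallback. Tight-and-flat at $j$ forces $w_j=w_{j+1}=v_j=v_{j+1}$ and tightness at $j\pm1$. Hence the flatness functional at $j$ lies in the span of the active partial-sum functionals, and the rank of the active set is at most one plus the number of indices that are tight or flat.

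The rest of your proposal is sound. That covers Rado for the characterization and Minkowski for the lottery claim. Your proof that coarsenings are vertices, via $N$ linearly independent active constraints, is a valid alternative to the paper's direct convex-decomposition argument.
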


Because the designer's objective function is linear and the extreme points of
$\Wcal^{\downarrow}(\boldsymbol{v})$ are fixed rank gradings (i.e., adjacent coarsenings of $\boldsymbol{v}$), there
always exists an optimal fixed rank grading.

\subsection{Rank ironing and the PAVA solution}

Now we are ready to characterize the optimum. Define
\begin{equation*}
B_0=0,
\qquad
B_k=\sum_{j=1}^k\beta_j,
\quad k=1,\ldots,N,
\end{equation*}
which are the cumulative sums of the first $k$ coordinates of $\boldsymbol{\beta}$.
By \eqref{eq:betasum}, $B_N=0$. Let $\widehat B$ be the \emph{least concave majorant} of the points
$(0,B_0),(1,B_1),\ldots,(N,B_N)$; that is, $\widehat B:[0,N]\to\mathbb{R}$
is the pointwise smallest concave function satisfying
$\widehat B(k)\ge B_k$ for every $k=0,\ldots,N$. The majorant touches the
original curve at both endpoints:
$\widehat B(0)=B_0$ and $\widehat B(N)=B_N$. Indeed, if either endpoint lay
strictly above the corresponding data value, lowering that endpoint to the
data value would preserve concavity and the majorant property, contradicting
minimality. Define its discrete slopes by
\[
\widehat\beta_j=\widehat B(j)-\widehat B(j-1),
\qquad j=1,\ldots,N,
\]
and write
$\widehat{\boldsymbol{\beta}}
=(\widehat\beta_1,\ldots,\widehat\beta_N)$.
Since $\widehat B$ is concave, its discrete slopes are weakly decreasing:
$\widehat\beta_1\ge\cdots\ge\widehat\beta_N$.

Let
\[
\widehat D
=
\{k\in\{0,\ldots,N\}:\widehat B(k)=B_k\}
\]
be the \emph{contact set}, namely the integer points at which the least concave
majorant coincides with the original cumulative-sum curve. Define the
\emph{kink set} by
\[
\widehat K
=
\left\{
k\in\widehat D\cap\{1,\ldots,N-1\}:
\widehat\beta_k>\widehat\beta_{k+1}
\right\}.
\]
Thus, a contact point is a kink only when the slope of $\widehat B$ strictly
falls there.

Suppose that $\widehat K$ contains $L-1$ points. Write
$\widehat K=\{k_1,\ldots,k_{L-1}\}$, where
$k_1<\cdots<k_{L-1}$, and set $k_0=0$ and $k_L=N$.\footnote{If
	$\widehat K=\varnothing$, then $L=1$ and there is a single block
	$I_1=\{1,\ldots,N\}$. If $\widehat K$ contains exactly one point $k_1$, then
	$L=2$ and the two blocks are $I_1=\{1,\ldots,k_1\}$ and
	$I_2=\{k_1+1,\ldots,N\}$.}
Each kink $k_\ell$ marks a boundary between ranks $k_\ell$ and
$k_\ell+1$. Thus, the $L-1$ kink locations partition the ranks into $L$
consecutive blocks
\[
I_\ell
=
\{k_{\ell-1}+1,\ldots,k_\ell\},
\qquad
\ell=1,\ldots,L.
\]

Because $\widehat B$ is affine on each interval
$[k_{\ell-1},k_\ell]$---equivalently, its graph on that interval is the
straight line segment joining
$(k_{\ell-1},\widehat B(k_{\ell-1}))$ and
$(k_\ell,\widehat B(k_\ell))$---its discrete slope
$\widehat\beta_j$ is constant for all $j\in I_\ell$. Moreover, since the endpoints of each block are
contact points, for every $j\in I_\ell$,
\[
\widehat\beta_j
=
\frac{\widehat B(k_\ell)-\widehat B(k_{\ell-1})}
{k_\ell-k_{\ell-1}}
=
\frac{B_{k_\ell}-B_{k_{\ell-1}}}
{k_\ell-k_{\ell-1}}
=
\frac{1}{|I_\ell|}
\sum_{r\in I_\ell}\beta_r.
\]

Hence each block collects adjacent ranks whose original coefficients are
replaced by their common block average. These block averages then strictly
decrease across blocks, because consecutive blocks are separated by kinks.
The same blocks can be computed by the pool-adjacent-violators algorithm (PAVA), described below in Section~\ref{subsec:pavaalgorithm}. Let $\boldsymbol{w}^{\mathrm{PAVA}}$ denote the prize vector associated with these blocks, with $j$th coordinate
\begin{equation}\label{eq:wPAVA}
	w_j^{\mathrm{PAVA}}
	=
	\frac{1}{|I_\ell|}
	\sum_{r\in I_\ell}v_r,
	\qquad
	j\in I_\ell.
\end{equation}

To understand the role of the majorant construction, begin with the original
coefficient vector $\boldsymbol{\beta}$. If the coefficients are already weakly
decreasing in rank, so that
$\beta_1\geq\beta_2\geq\cdots\geq\beta_N$, then the cumulative-sum sequence
$(B_k)_{k=0}^{N}$ is concave. Its least concave majorant therefore coincides
with the original cumulative-sum curve: $\widehat B(k)=B_k$ for every
$k=0,\ldots,N$. Equivalently,
$\widehat\beta_j=\beta_j$ for every $j=1,\ldots,N$, so no ironing is needed. In this case, since
both $\boldsymbol{\beta}$ and the prize vector $\boldsymbol{v}$ are weakly
decreasing, the rank-only problem in \eqref{eq:rankproblem} is solved by
$\boldsymbol{w}=\boldsymbol{v}$: the contestant at rank $j$ receives the
$j$th-highest prize $v_j$.

The problem becomes nontrivial when
$\boldsymbol{\beta}$ is not weakly decreasing. The least-concave-majorant construction is an \emph{ironing} procedure that restores
this monotonicity. It pools adjacent coefficients that violate the required
order and replaces them by their block averages, thereby producing the weakly
decreasing ironed vector
$\widehat{\boldsymbol{\beta}}$. Since both
$\widehat{\boldsymbol{\beta}}$ and $\boldsymbol{v}$ are weakly decreasing, the
largest ironed coefficient is paired with the largest prize, the
second-largest coefficient with the second-largest prize, and so on. This does
not mean that the designer directly replaces the original coefficients by the
ironed ones. Rather, the following theorem shows that
$\widehat{\boldsymbol{\beta}}\boldsymbol{v}^{T}$ equals the maximum of the
original problem and that this value is implemented by
$\boldsymbol{w}^{\mathrm{PAVA}}$, which averages the prizes within the blocks
generated by the ironing procedure.

The following theorem establishes the optimality of this construction.

\begin{theorem}[Rank-ironing solution]\label{thm:pava}
The vector $\boldsymbol{w}^{\mathrm{PAVA}}$, defined above in \eqref{eq:wPAVA}, solves the rank-only design problem \eqref{eq:rankproblem}. The optimal expected total effort is
\begin{equation*}
\Vrank
=\widehat{\boldsymbol{\beta}}\boldsymbol{v}^{T}
=\boldsymbol{\beta}\bigl(\boldsymbol{w}^{\mathrm{PAVA}}\bigr)^{T}
=\sum_{\ell=1}^{L}
\frac{\left(\sum_{j\in I_\ell}\beta_j\right)
      \left(\sum_{j\in I_\ell}v_j\right)}{|I_\ell|}.
\end{equation*}
Thus, an optimal fixed rank grading is obtained by averaging the prizes within each block generated by the rank-ironing procedure.
\end{theorem}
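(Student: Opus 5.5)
The plan is to use summation by parts together with the cumulative-sum characterization of $\Wcal^{\downarrow}(\boldsymbol{v})$ from Lemma~\ref{lem:orderedpolytope}. I will establish three facts: an upper bound $\boldsymbol{\beta}\boldsymbol{w}^T\le\widehat{\boldsymbol{\beta}}\boldsymbol{v}^T$ for every feasible $\boldsymbol{w}$, feasibility of $\boldsymbol{w}^{\mathrm{PAVA}}$, and the identity $\boldsymbol{\beta}(\boldsymbol{w}^{\mathrm{PAVA}})^T=\widehat{\boldsymbol{\beta}}\boldsymbol{v}^T$.

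\textbf{Upper bound.} For any $\boldsymbol{w}\in\Wcal^{\downarrow}(\boldsymbol{v})$, Abel summation with $B_N=0$ gives
\[
\boldsymbol{\beta}\boldsymbol{w}^T=\sum_{k=1}^{N-1}B_k(w_k-w_{k+1})+B_Nw_N=\sum_{k=1}^{N-1}B_k(w_k-w_{k+1}).
\]
Since $w_k-w_{k+1}\ge0$ and $B_k\le\widehat B(k)$, this is at most $\sum_{k=1}^{N-1}\widehat B(k)(w_k-w_{k+1})$, using $\widehat B(N)=0$. Reversing the summation by parts turns this into $\widehat{\boldsymbol{\beta}}\boldsymbol{w}^T$, because $\widehat B(0)=0$. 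A second summation by parts in the other variable gives
\[
\widehat{\boldsymbol{\beta}}\boldsymbol{w}^T=\sum_{k=1}^{N-1}(\widehat\beta_k-\widehat\beta_{k+1})S_k(\boldsymbol{w})+\widehat\beta_N S_N(\boldsymbol{w}).
\]
Here $\widehat\beta_k-\widehat\beta_{k+1}\ge0$ by concavity, $S_k(\boldsymbol{w})\le V_k$, and $S_N(\boldsymbol{w})=V_N$, all by Lemma~\ref{lem:orderedpolytope}. So the expression is at most $\widehat{\boldsymbol{\beta}}\boldsymbol{v}^T$. No sign condition on $\widehat\beta_N$ is needed, since the $N$th term is an equality.

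\textbf{Feasibility and attainment.} First, $\boldsymbol{w}^{\mathrm{PAVA}}$ is an adjacent coarsening of $\boldsymbol{v}$, so Lemma~\ref{lem:orderedpolytope} puts it in $\Wcal^{\downarrow}(\boldsymbol{v})$. To show equality in both inequalities, check complementary slackness.
\begin{itemize}
\item In the first inequality, $w^{\mathrm{PAVA}}_k-w^{\mathrm{PAVA}}_{k+1}>0$ can hold only at block boundaries $k=k_\ell\in\widehat K\subseteq\widehat D$. At those points $B_k=\widehat B(k)$, so every term with positive weight holds with equality.
\item In the second inequality, $\widehat\beta_k>\widehat\beta_{k+1}$ holds exactly at kinks. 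Since $\widehat B$ is affine on each $[k_{\ell-1},k_\ell]$, any $k$ with a strict slope drop lies in $\widehat D$ and hence in $\widehat K$, so it is a block endpoint. At a block endpoint, $S_{k_\ell}(\boldsymbol{w}^{\mathrm{PAVA}})=V_{k_\ell}$ because averaging within blocks preserves block sums.
\end{itemize}
Hence $\boldsymbol{\beta}(\boldsymbol{w}^{\mathrm{PAVA}})^T=\widehat{\boldsymbol{\beta}}\boldsymbol{v}^T$, which proves optimality and gives the first two equalities of the value.

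\textbf{Closed form.} The block formula follows directly. On block $I_\ell$, $w^{\mathrm{PAVA}}_j$ is constant and equal to $|I_\ell|^{-1}\sum_{j\in I_\ell}v_j$. Therefore $\sum_{j\in I_\ell}\beta_jw^{\mathrm{PAVA}}_j$ equals the stated product divided by $|I_\ell|$. The same expression arises from $\widehat{\boldsymbol{\beta}}\boldsymbol{v}^T$, using the block-average formula for $\widehat\beta_j$ established above.

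\textbf{Main obstacle.} The delicate step is the claim that slope drops of $\widehat B$ occur only at contact points. This holds because the least concave majorant of finitely many points is piecewise linear with vertices among the data points; otherwise it could be lowered near a vertex. Once this standard fact is stated, the block decomposition matches the kink set and the complementary slackness argument goes through. Degenerate cases, such as blocks where $\boldsymbol{v}$ is itself constant, cause no difficulty because the argument uses only the inequalities.
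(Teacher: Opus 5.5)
Your proof is correct, but it takes a different route from the paper.

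\textbf{The paper's route.} It never bounds a general feasible $\boldsymbol{w}$ directly. It fixes an arbitrary fixed rank grading $\mathcal I$ and uses the blockwise exchange $\boldsymbol{\beta}(\boldsymbol{w}^{\mathcal I})^{T}=\boldsymbol{\beta}^{\mathcal I}\boldsymbol{v}^{T}$. The cumulative sums of the block-averaged coefficients are chord interpolations of $(B_k)$ between the partition boundaries, so they lie below $\widehat B$. One Abel summation against the decreasing $\boldsymbol{v}$ then gives the bound. The bound is extended to all of $\Wcal^{\downarrow}(\boldsymbol{v})$ by linearity, using the extreme-point half of Lemma~\ref{lem:orderedpolytope}. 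Attainment is the identity $\boldsymbol{\beta}^{\mathrm{PAVA}}=\widehat{\boldsymbol{\beta}}$.

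\textbf{Your route.} You need only the cumulative-sum (majorization) description of $\Wcal^{\downarrow}(\boldsymbol{v})$. You split the bound into two Abel summations, $\boldsymbol{\beta}\boldsymbol{w}^{T}\le\widehat{\boldsymbol{\beta}}\boldsymbol{w}^{T}\le\widehat{\boldsymbol{\beta}}\boldsymbol{v}^{T}$. This is exactly the discrete counterpart of the two-step bound \eqref{eq:twostepbound} in the performance regime: first ironing, then assortative matching in the majorization order. It therefore makes the paper's unified ironing principle literal.

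\textbf{What each buys.} Because each of your inequalities is a sum of nonnegative terms, complementary slackness characterizes the entire optimal set. A feasible $\boldsymbol{w}$ is optimal if and only if $w_k=w_{k+1}$ whenever $B_k<\widehat B(k)$, and $S_k(\boldsymbol{w})=V_k$ whenever $\widehat\beta_k>\widehat\beta_{k+1}$. The paper's route instead compares every fixed grading directly with the ironed one, which fits its fixed-grading interpretation, but it needs the extreme-point characterization.

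Both proofs rest on the fact you flag: strict slope drops of $\widehat B$ occur only at contact points, because the upper hull of finitely many points has its vertices among the data. The paper uses this implicitly when it asserts that $\widehat B$ is affine between successive kinks.
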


For an example based on an admissible ability distribution, let $N=5$
and let the quantile function be
$
q(x)=1+\frac{x}{10}+10x^7$, $x\in[0,1].
$
Since
$
q'(x)=\frac{1}{10}+70x^6>0,
$
this defines a continuously differentiable distribution with strictly positive
density on its support. Substituting into \eqref{eq:beta} gives
$
\boldsymbol{\beta}
=
\left(
\frac{142}{55},
\frac{47}{110},
\frac{13}{33},
\frac{707}{990},
-\frac{2038}{495}
\right).
$
The corresponding cumulative sums are
$
(B_0,B_1,\ldots,B_5)
=
\left(
0,
\frac{142}{55},
\frac{331}{110},
\frac{1123}{330},
\frac{2038}{495},
0
\right).
$
The least concave majorant consists of the straight line segments joining
$(0,0)$ to $\left(1,\frac{142}{55}\right)$,
$\left(1,\frac{142}{55}\right)$ to
$\left(4,\frac{2038}{495}\right)$, and
$\left(4,\frac{2038}{495}\right)$ to $(5,0)$. Its values at the integer
points are
\[
\bigl(\widehat B(0),\widehat B(1),\ldots,\widehat B(5)\bigr)
=
\left(
0,
\frac{142}{55},
\frac{4594}{1485},
\frac{5354}{1485},
\frac{2038}{495},
0
\right).
\]
Hence its discrete slopes are
$
\widehat{\boldsymbol{\beta}}
=
\left(
\frac{142}{55},
\frac{152}{297},
\frac{152}{297},
\frac{152}{297},
-\frac{2038}{495}
\right).
$
The contact and kink sets are therefore
$
\widehat D=\{0,1,4,5\}$ and
$\widehat K=\{1,4\}.
$
The corresponding blocks are
$
I_1=\{1\}$,
$I_2=\{2,3,4\}$, and
$I_3=\{5\}.
$

\begin{figure}[ht]
\centering
\includegraphics[width=0.72\textwidth]
{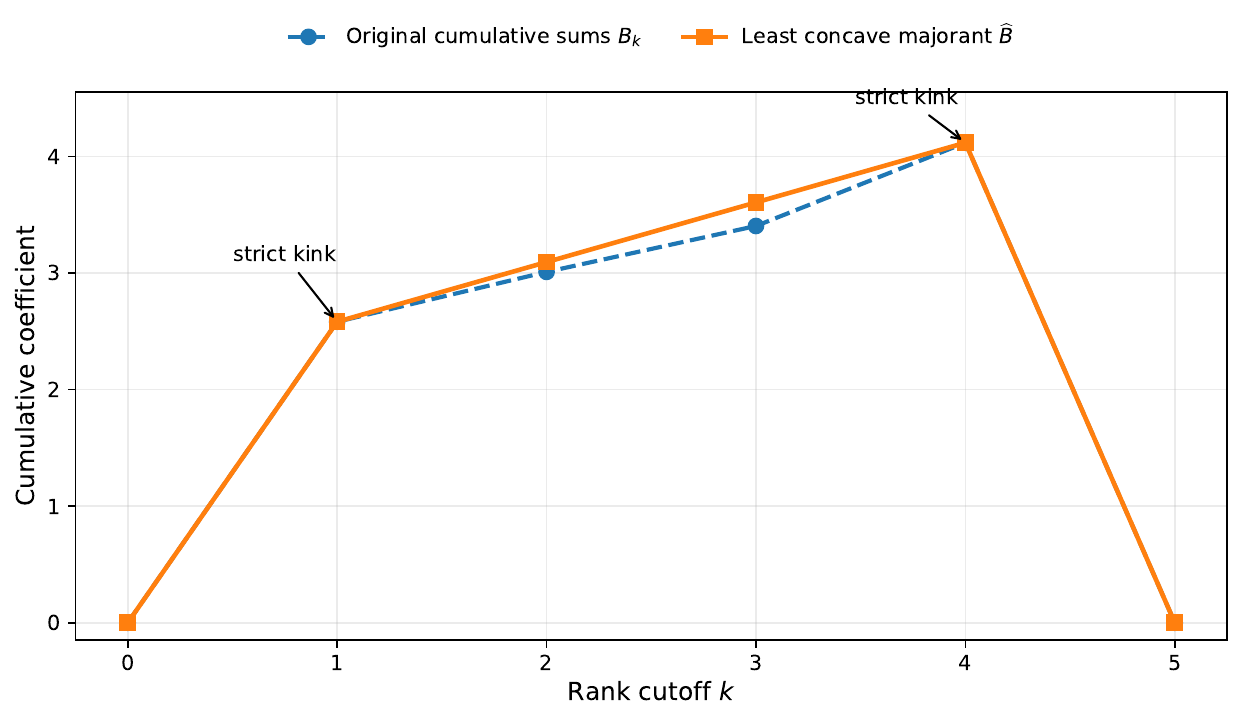}
\caption{The cumulative sums $B_k$ and their least concave majorant
$\widehat B$ in the five-rank example generated by
$q(x)=1+x/10+10x^7$. The strict kinks at $k=1$ and $k=4$
determine the blocks $\{1\}$, $\{2,3,4\}$, and $\{5\}$.}
\label{fig:pava-majorant-example}
\end{figure}

The associated expected prize vector is
\[
\boldsymbol{w}^{\mathrm{PAVA}}
=
\left(
v_1,
\frac{v_2+v_3+v_4}{3},
\frac{v_2+v_3+v_4}{3},
\frac{v_2+v_3+v_4}{3},
v_5
\right).
\]
Thus, rank $1$ remains a singleton, ranks $2$, $3$, and $4$ share the prizes
$v_2$, $v_3$, and $v_4$ uniformly, and rank $5$ receives $v_5$.

\subsection{The PAVA algorithm}\label{subsec:pavaalgorithm}

The least concave majorant can also be found without drawing the cumulative-sum curve. The \emph{pool-adjacent-violators algorithm} (PAVA) \citep[see][]{BarlowEtAl1972} processes the coefficients from the best rank to the worst rank and maintains provisional consecutive blocks. Each block is represented by the average of the coefficients it contains.

\begin{enumerate}[label=\textbf{Step \arabic*.},leftmargin=2.2cm]
\item Start with $\beta_1$ as the first provisional singleton block.

\item Move to the next coefficient and place it in a new singleton block. Compare the mean of the block on the left with the mean of the new block. If the left mean is weakly larger, the required decreasing order is satisfied, so keep the boundary provisionally and move to the next coefficient.

\item If the left mean is smaller than the right mean, the two blocks violate the required order. Pool them into a single block and replace their coefficients by their common average. For two blocks $I$ and $J$, the new mean is $\bigl(\sum_{j\in I\cup J}\beta_j\bigr)/|I\cup J|$.

\item After every pooling operation, move one block backward and compare the newly formed block with the block immediately to its left. If the left mean is again smaller, pool those two blocks as well. Continue moving backward until the block means are weakly decreasing. Then resume the forward pass with the next unprocessed coefficient.

\item Stop after $\beta_N$ has been processed and all adjacent block means are weakly decreasing. If two neighboring blocks have the same mean, combine them to obtain the maximal constant blocks of $\widehat{\boldsymbol{\beta}}$. Finally, average the corresponding prizes within each block according to \eqref{eq:wPAVA}.
\end{enumerate}

The backward check in Step 4 is essential. A new block average may be larger than the mean of an earlier block even when the original adjacent coefficients did not violate monotonicity. The algorithm therefore revisits earlier boundaries after every merge.

Return to the example above. Initially, each rank forms a singleton block,
with provisional means
$
\frac{142}{55}$,
$\frac{47}{110}$,
$\frac{13}{33}$,
$\frac{707}{990}$, and
$-\frac{2038}{495}.
$
The first three means are provisionally decreasing because
$
\frac{142}{55}
>
\frac{47}{110}
>
\frac{13}{33}.
$
The next comparison yields
$
\frac{13}{33}
<
\frac{707}{990},
$
so PAVA merges the singleton blocks $\{3\}$ and $\{4\}$. Their common mean is
$
\frac{1}{2}
\left(
\frac{13}{33}+\frac{707}{990}
\right)
=
\frac{1097}{1980}.
$
After this merger, the algorithm moves one block backward. Since
$
\frac{47}{110}
<
\frac{1097}{1980},
$
the block $\{2\}$ must be merged with $\{3,4\}$. The resulting block
$\{2,3,4\}$ has mean
$
\frac{1}{3}
\left(
\frac{47}{110}+\frac{13}{33}+\frac{707}{990}
\right)
=
\frac{152}{297}.
$
Finally,
$
\frac{142}{55}
>
\frac{152}{297}
>
-\frac{2038}{495},
$
so ranks $1$ and $5$ remain singleton blocks. The final blocks are therefore
$\{1\}$, $\{2,3,4\}$, and $\{5\}$, and
\[
\widehat{\boldsymbol{\beta}}
=
\left(
\frac{142}{55},
\frac{152}{297},
\frac{152}{297},
\frac{152}{297},
-\frac{2038}{495}
\right),
\]
which coincides with the discrete slope vector of the least concave majorant
shown in Figure~\ref{fig:pava-majorant-example}.

\para{Why the algorithm works.}
For any provisional block $I=\{p,\ldots,q\}$, its block mean
$\overline{\beta}(I):=|I|^{-1}\sum_{j\in I}\beta_j$ is the slope of the
straight line segment joining the cumulative-sum points
$(p-1,B_{p-1})$ and $(q,B_q)$. Since a concave function must have weakly
decreasing slopes, two adjacent blocks $I$ and $J$ with
$\overline{\beta}(I)<\overline{\beta}(J)$ cannot remain separate in the least
concave majorant. Pooling them replaces the two violating slopes by their
weighted average, which is exactly the slope of the straight line segment
joining the outer endpoints of $I\cup J$.

After a merger, the new block mean may exceed the mean of the block immediately
to its left, which explains why the algorithm must move backward and check
again. The procedure stops only when the means of all successive blocks are
weakly decreasing. At that point, the line obtained by connecting the
cumulative-sum points at the block boundaries is concave and lies weakly above
all the original points $(k,B_k)$. Moreover, any concave majorant must lie
weakly above each of these blockwise line segments. The resulting curve is
therefore the least concave majorant $\widehat B$.

Consequently, the final block means produced by PAVA are precisely the ironed
coefficients $\widehat\beta_j$. Adjacent final blocks with the same mean may be
combined to obtain the maximal blocks $I_1,\ldots,I_L$. Averaging the fixed
prizes over these same blocks gives
$\boldsymbol{w}^{\mathrm{PAVA}}$. Theorem~\ref{thm:pava} then formally shows that this
expected prize vector solves the rank-only design problem.

\section{Optimal grading with performance information}\label{sec:performance}

Once the designer observes numerical effort levels, she can endogenously
define performance categories based on effort profiles.
Directly analyzing the resulting equilibrium effort function can be
complicated because the optimal rule may create flat regions, jumps, and
unused effort intervals. We therefore adopt an indirect mechanism-design
approach. We first characterize the optimal direct mechanism and then show
that it can be implemented by a graded all-pay contest whose prize rule is
contingent on observed effort levels.

The direct-mechanism formulation yields an upper bound on the expected total
effort of every equilibrium outcome of an admissible performance-contingent
contest. Given any equilibrium,
possibly asymmetric or in mixed strategies, a direct mechanism asks contestants
to report their types and reproduces the possibly randomized effort and prize
outcome generated by the equilibrium strategies of the reported types. Truthful
reporting is Bayesian incentive compatible, and expected total effort is
unchanged. The resulting direct mechanism may initially be asymmetric.

Because types are i.i.d. and both feasibility and the objective are symmetric,
anonymity is without loss. Starting from any direct mechanism, draw a uniform
random permutation of contestant labels, apply the mechanism to the permuted
report profile, and then undo the permutation. This symmetrization preserves
feasibility, Bayesian incentive compatibility, the zero-effort deviation
constraint described below, and expected total effort. Consequently, every
equilibrium outcome has an anonymous direct-mechanism counterpart with the same
value. Conversely, Theorem~\ref{thm:implementation} implements the optimal
direct outcome by a graded all-pay contest. Thus, the direct problem gives an
upper bound that is attained within the contest class studied here.

\subsection{Direct mechanisms}

A direct mechanism asks each contestant to report a type and then specifies a
possibly randomized nonnegative effort requirement and assigns the $N$ fixed
prizes based on the report profile.
For every profile, each realized assignment uses every prize exactly once and
gives every contestant exactly one prize; the mechanism may randomize across
such assignments. Exclusion and prize withholding are not allowed.

For a symmetric mechanism, let $\Qtilde(t)$ denote the interim expected prize
of a truthful type $t$, let $E(t)$ denote her interim expected effort, and let
\[
U(t)=\Qtilde(t)-\frac{E(t)}{t}
\]
be her interim utility. In quantile space, write
$Q(x)=\Qtilde(q(x))$. The following result is standard in mechanism design
\citep[cf.][]{Myerson1981}, so we omit the proof.

\begin{lemma}\label{lem:IC}
A pair $(\Qtilde,E)$ with $E(t)\geq0$ for every $t$ is incentive compatible if
and only if $\Qtilde$ is nondecreasing and
\begin{equation}\label{eq:envelope}
E(t)
=t\Qtilde(t)-\int_a^t\Qtilde(s)\dd s-aU(a)
\end{equation}
for every $t\in[a,b]$, where the constant $U(a)=\Qtilde(a)-E(a)/a$ is the utility of the lowest type. Under incentive compatibility,
\[
U(t)
=\frac{aU(a)+\int_a^t\Qtilde(s)\dd s}{t},
\]
so ordinary individual rationality is equivalent to $U(a)\geq0$. Expected
total effort is
\begin{equation}\label{eq:directTE}
\mathrm{TE}
=N\int_0^1\varphi(x)Q(x)\dd x-NaU(a).
\end{equation}
\end{lemma}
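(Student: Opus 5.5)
The plan is the standard Myerson argument, adapted to linear costs $e/t$, with the expected-effort formula obtained by integration by parts in quantile space. Fix a symmetric direct mechanism and write the interim utility of type $t$ reporting $s$ as $\Qtilde(s)-E(s)/t$. Multiplying by $t$, the quantity $tU(t)=\max_s\{t\Qtilde(s)-E(s)\}$ is a pointwise supremum of affine functions of $t$. It is therefore convex, with subgradient $\Qtilde(t)$ at $t$.

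For necessity, write the two incentive constraints between types $t$ and $t'$ in the scaled form $t\Qtilde(t)-E(t)\ge t\Qtilde(t')-E(t')$, together with the symmetric inequality. Adding them gives $(t-t')(\Qtilde(t)-\Qtilde(t'))\ge0$, so $\Qtilde$ is nondecreasing. Convexity gives absolute continuity of $tU(t)$ with derivative $\Qtilde(t)$ almost everywhere. Integrating from $a$ yields $tU(t)=aU(a)+\int_a^t\Qtilde(s)\dd s$, which is the stated utility formula. Substituting $E(t)=t\Qtilde(t)-tU(t)$ then gives \eqref{eq:envelope}.

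For sufficiency, take $\Qtilde$ nondecreasing and define $E$ by \eqref{eq:envelope}. The gain of type $t$ from truthful reporting over reporting $s$, scaled by $t$, is
\[
t\Qtilde(t)-E(t)-t\Qtilde(s)+E(s)=\int_s^t\bigl(\Qtilde(r)-\Qtilde(s)\bigr)\dd r\ge 0 .
\]
This holds in both orders of $s$ and $t$ by monotonicity. Since $tU(t)$ is increasing in $t$ when $U(a)\ge0$ and $\Qtilde\ge0$, we have $U(t)\ge0$ for all $t$ if and only if $U(a)\ge0$.

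For the objective, compute $\E[E(t)]$ and change variables to $t=q(x)$. This gives $\E E=\int_0^1\bigl[q(x)Q(x)-\int_0^x Q(y)q'(y)\dd y\bigr]\dd x-aU(a)$. Exchanging the order of integration in the double integral (Fubini) turns it into $\int_0^1(1-y)q'(y)Q(y)\dd y$. Hence $\E E=\int_0^1\varphi(x)Q(x)\dd x-aU(a)$, and multiplying by $N$ gives \eqref{eq:directTE}.

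The only delicate step is the absolute continuity and envelope step. The convexity route handles it without any differentiability assumption on $\Qtilde$, and Fubini is justified because $Q$ is bounded by $v_1$ and $q'$ is continuous.
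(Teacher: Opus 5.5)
Your proof is correct and is the standard Myerson argument that the paper cites in place of a proof; it follows the same steps the paper uses in its proof of Lemma~\ref{lem:rankequilibrium}: integrating the envelope condition, checking global incentive compatibility via $\int_s^t(\Qtilde(r)-\Qtilde(s))\dd r\ge0$, and exchanging the order of integration for expected effort. Deriving the envelope formula from convexity of $tU(t)$, a supremum of affine functions with slopes in $[v_N,v_1]$, is more careful than the first-order condition used there, because it also covers interim prizes with jumps such as $Q^*$.
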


The no-exclusion requirement is weakly stronger than ordinary individual
rationality. A contestant may always choose zero effort and must
still receive one of the fixed prizes, so this deviation yields at least
$v_N$. Hence every contest-implementable outcome must satisfy
$U(t)\geq v_N$ for every $t$. Under incentive compatibility,
$U'(t)\geq0$ almost everywhere, so it is enough to impose
$U(a)\geq v_N$. Since $U(a)$ enters \eqref{eq:directTE} with coefficient
$-Na$, the optimum sets
\begin{equation}\label{eq:bottomutility}
U(a)=v_N.
\end{equation}
The lowest type then exerts
\begin{equation}\label{eq:bottom-effort-general}
E(a)=a\bigl(\Qtilde(a)-v_N\bigr)\geq0.
\end{equation}
This is the lowest equilibrium effort. It is not an exclusion threshold:
efforts below it remain available and form a failing grade in the contest
implementation.

\subsection{Reduced-form feasibility}

Recall from \eqref{eq:Pbar} that
$\Pbar:[0,1]\to\mathbb{R}$ is defined by
$\Pbar(x)=\sum_{j=1}^{N}v_jp_j(x)$, where $p_j(x)$ is the probability that a
contestant at quantile $x$ attains rank $j$. Thus, $\Pbar(x)$ is the
contestant's interim expected prize under the fully assortative assignment, in
which rank $j$ receives prize $v_j$. The following notion will be helpful when
deriving a reduced-form feasibility bound.

\begin{definition}[Majorization]\label{def:majorization}
	Let $Q:[0,1]\to\mathbb{R}$ be a nondecreasing integrable function. We say that
	$Q$ is \emph{majorized} by $\Pbar$, denoted by $Q\preceq\Pbar$, if
	\begin{equation*}
		\int_x^1 Q(s)\dd s
		\leq
		\int_x^1 \Pbar(s)\dd s
		\quad\text{for every }x\in[0,1],
		\qquad
		\int_0^1 Q(s)\dd s
		=
		\int_0^1 \Pbar(s)\dd s
		=
		\frac{1}{N}\sum_{j=1}^{N}v_j.
	\end{equation*}
\end{definition}

The tail inequalities require that, for every quantile cutoff $x$, the
aggregate interim prize assigned to contestants with quantiles in $[x,1]$
does not exceed that under the fully assortative assignment. The equality of
the total integrals requires the entire fixed prize vector to be allocated. The following result provides the reduced-form feasibility bound used below.

\begin{lemma}[Feasibility bound]\label{lem:continuousfeas}
	The interim expected prize of every symmetric incentive-compatible mechanism
	is nondecreasing and satisfies $Q\preceq\Pbar$.
\end{lemma}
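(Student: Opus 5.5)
Monotonicity of $Q$ comes straight from Lemma~\ref{lem:IC}: incentive compatibility forces $\Qtilde$ to be nondecreasing, and $q$ is strictly increasing, so $Q=\Qtilde\circ q$ is nondecreasing. The substance of the statement is the majorization $Q\preceq\Pbar$. I would establish it with a Border-type argument done pointwise for each realized prize assignment and then averaged over types. The total-integral condition is the easy part. In every realization each prize is used exactly once, so $\sum_i(\text{prize of }i)=\sum_j v_j$. Taking expectations and using symmetry gives $N\int_0^1 Q(s)\dd s=\sum_j v_j$. The value $\int_0^1\Pbar=\frac1N\sum_j v_j$ follows from $\int_0^1 p_j=\frac1N$.

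For the tail inequalities, fix $x\in[0,1]$ and work in quantile space, where the quantiles $x_1,\ldots,x_N$ are i.i.d.\ uniform. By symmetry,
\[
N\int_x^1 Q(s)\dd s=\E\Bigl[\sum_{i=1}^N \pi_i\,\one\{x_i\ge x\}\Bigr],
\]
where $\pi_i$ is the prize realized by contestant $i$. Let $K$ be the number of contestants with $x_i\ge x$. For any realized bijection, the total prize these $K$ contestants receive is at most $V_K=\sum_{j\le K}v_j$, the sum of the $K$ largest prizes. Hence the left side is at most $\E[V_K]$. The same computation for the fully assortative rule gives exactly $N\int_x^1\Pbar(s)\dd s=\E[V_K]$. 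Under that rule the contestants above $x$ occupy the top $K$ ranks, so they receive exactly $V_K$. This matches the integral because $\Pbar$ is the interim prize of the rank rule in which quantile determines rank almost surely, with ties having probability zero. Combining the two displays yields $\int_x^1 Q\le\int_x^1\Pbar$.

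The main point requiring care is the identification of $N\int_x^1\Pbar$ with $\E[V_K]$. I would verify it directly. We have $\int_x^1\Pbar=\sum_j v_j\int_x^1 p_j$, and $N\int_x^1 p_j(s)\dd s$ equals the probability that the rank-$j$ order statistic of $N$ uniforms exceeds $x$, which is $\Pr(K\ge j)$. Summing gives $\sum_j v_j\Pr(K\ge j)=\E[\sum_{j\le K}v_j]$. The identity $N\int_x^1 p_j=\Pr(K\ge j)$ is the standard beta/binomial relation. I would check it either by differentiating in $x$ or by noting that $Np_j(s)$ is the density of the $j$th-highest order statistic. A second subtlety is that the mechanism may be randomized and may depend on the whole report profile. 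The pointwise bound holds for every realization, though, so taking expectations covers this. Incentive compatibility is not needed for this step, only for monotonicity.
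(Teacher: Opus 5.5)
Your proposal is correct and follows essentially the same route as the paper. Monotonicity comes from Lemma~\ref{lem:IC}; the tail inequalities come from the realization-by-realization bound that the $K$ contestants above quantile $x$ receive at most $V_K$, averaged by symmetry and compared with the fully assortative rule. The only difference is that you verify $N\int_x^1\Pbar(s)\dd s=\E[V_K]$ explicitly through the order-statistic identity $N\int_x^1 p_j(s)\dd s=\Pr(K\ge j)$, whereas the paper takes this equality directly from the interpretation of $\Pbar$ as the interim prize under full sorting.
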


The tail inequalities have a direct interpretation. For any quantile cutoff
$x$, the contestants whose quantiles exceed $x$ cannot collectively receive
more than the same number of highest prizes. The fully assortative assignment,
whose interim prize function is $\Pbar$, attains this upper bound.

Combining Lemmas~\ref{lem:IC} and \ref{lem:continuousfeas}, the optimal value
in Regime P satisfies
\begin{equation}\label{eq:performanceprogram}
\Vperf
\leq
\overline V^{\mathrm P}
:=
\sup_Q
\left\{
N\int_0^1\varphi(x)Q(x)\dd x-Na v_N:
Q\text{ nondecreasing and }Q\preceq\Pbar
\right\}.
\end{equation}
We do not need to assume that every reduced form in this relaxation is
feasible. The optimizer below is generated by an explicit ex post assignment,
and Theorem~\ref{thm:implementation} implements it in a no-exclusion contest.

\subsection{Convex-envelope ironing in type space}

Define the integrated virtual ability
\begin{equation*}
\Phi(x)=\int_0^x\varphi(s)\dd s.
\end{equation*}
The following result provides some useful properties of $\Phi$.
\begin{lemma}\label{lem:Phi}
For every $x\in[0,1]$,
\begin{equation}\label{eq:Phiformula}
\Phi(x)=a-(1-x)q(x).
\end{equation}
In particular, $\Phi(0)=0$ and $\Phi(1)=a$.
\end{lemma}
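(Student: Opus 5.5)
The argument computes $\Phi$ directly in quantile space and integrates by parts. By definition $\varphi(s)=q(s)-(1-s)q'(s)$, and this expression is an exact derivative:
\[
\frac{\dd}{\dd s}\bigl[-(1-s)q(s)\bigr]=q(s)-(1-s)q'(s)=\varphi(s).
\]
Since $F$ is continuously differentiable with strictly positive density on $[a,b]$, the quantile function $q=F^{-1}$ is continuously differentiable on $[0,1]$ with $q'(x)=1/f(q(x))$, which is bounded. Hence $\varphi$ is continuous and the fundamental theorem of calculus applies on $[0,x]$:
\[
\Phi(x)=\int_0^x\varphi(s)\dd s=\Bigl[-(1-s)q(s)\Bigr]_0^x=-(1-x)q(x)+q(0).
\]
Using $q(0)=F^{-1}(0)=a$ gives \eqref{eq:Phiformula}.

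The endpoint values then follow by substitution. At $x=0$, $\Phi(0)=a-q(0)=0$. At $x=1$, the factor $1-x$ vanishes and $q(1)=b$ is finite, so $\Phi(1)=a-0\cdot b=a$.

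The only point needing care is regularity at the endpoints, namely that $q$ is $C^1$ up to $x=0$ and $x=1$. Because $f$ is continuous and strictly positive on the compact interval $[a,b]$, it is bounded away from zero, so $q'$ is bounded and continuous on the closed interval. The product $(1-x)q(x)$ is therefore continuous at $x=1$, and no improper integral arises. If one preferred to avoid this, one could instead change variables $t=q(s)$ and use $J(t)f(t)=tf(t)-(1-F(t))$, whose antiderivative is $-t(1-F(t))$. This yields the same identity with $\Phi(x)=\int_a^{q(x)}J(t)f(t)\dd t$.
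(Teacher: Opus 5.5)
Your proposal is correct and follows essentially the same route as the paper: the paper observes that $a-(1-x)q(x)$ vanishes at $x=0$ and has derivative $\varphi(x)$, so it equals $\int_0^x\varphi(s)\dd s$, which is exactly your fundamental-theorem-of-calculus argument. Your added remarks on the $C^1$ regularity of $q$ up to the endpoints and the alternative change of variables through $J(t)f(t)$ are accurate but not needed beyond what the paper uses.
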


Following the ironing approach of \citet{Myerson1981}, let $G$ be the
\emph{greatest convex minorant} of $\Phi$ on $[0,1]$. That is, $G$ is convex,
$G(x)\leq\Phi(x)$ for every $x\in[0,1]$, and every other convex function
$H$ satisfying $H(x)\leq\Phi(x)$ for all $x\in[0,1]$ also satisfies
$H(x)\leq G(x)$ for all $x\in[0,1]$. Equivalently, $G$ is the pointwise
largest convex function lying weakly below $\Phi$.

The minorant touches $\Phi$ at both endpoints:
$G(0)=\Phi(0)$ and $G(1)=\Phi(1)$. If either endpoint lay strictly below
$\Phi$, raising that endpoint to the corresponding value of $\Phi$ would
preserve convexity and the minorant property, contradicting maximality.

Because $G$ is convex, its right derivative is weakly increasing and
right-continuous. We therefore define
\begin{equation}\label{eq:phibar-definition}
	\phibar(x)
	=
	\begin{cases}
		G'_+(x),&x\in[0,1),\\
		G'_-(1),&x=1.
	\end{cases}
\end{equation}
We write $\dd\phibar$ for the associated Lebesgue--Stieltjes measure.

Define the \emph{type-ironing set} by
$
O
:=
\{x\in(0,1):G(x)<\Phi(x)\}.
$
Since both $G$ and $\Phi$ are continuous, $O$ is open relative to $(0,1)$.
It can therefore be written uniquely as an at most countable union of
pairwise disjoint open intervals:
\[
O=\bigcup_{m\in\mathcal M}(\ell_m,r_m),
\qquad
0\leq\ell_m<r_m\leq1.
\]
Each interval $(\ell_m,r_m)$ is a maximal connected component of $O$, meaning
that it is not properly contained in any larger interval lying entirely
inside $O$.\footnote{If $O=\varnothing$, then the index set $\mathcal M$ is
	empty, $G=\Phi$ on $[0,1]$, and $\phibar=\varphi$. In this
	case, no type ironing occurs.} On each component, $G$ is affine and
\begin{equation*}
\phibar(x)
=\frac1{r_m-\ell_m}
\int_{\ell_m}^{r_m}\varphi(s)\dd s.
\end{equation*}
Because $\varphi$ is continuous, $O$ is empty if and only if $\varphi$, equivalently $J$, is nondecreasing.

For implementation, associate with each component the operational grade cell
\begin{equation}\label{eq:type-ironing-cells}
I_m
=
\begin{cases}
[\ell_m,r_m),&r_m<1,\\
[\ell_m,1],&r_m=1.
\end{cases}
\end{equation}
This convention assigns every boundary quantile to exactly one grade cell. In
particular, if ironing begins at zero, the lowest type belongs to the bottom
passing grade.

Using these cells, define the following expected prize function:
\begin{equation}\label{eq:Qstar}
Q^*(x)=
\begin{cases}
\dfrac1{r_m-\ell_m}
\displaystyle\int_{\ell_m}^{r_m}\Pbar(s)\dd s,
&x\in I_m,\\[2ex]
\Pbar(x),&x\notin\displaystyle\bigcup_m I_m.
\end{cases}
\end{equation}
Thus, the fully assortative interim prize is averaged over every type-ironing
grade.

\para{Graded assortative rule.}
Given a reported ability profile
$\widehat{\boldsymbol{t}}=(\widehat t_1,\ldots,\widehat t_N)$, let
$\widehat x_i=F(\widehat t_i)$ denote contestant $i$'s reported quantile.
Each operational cell $I_m$ forms one non-singleton grade. Every reported
quantile $\widehat x_i$ outside $\bigcup_m I_m$ forms the point grade $\{\widehat x_i\}$; contestants
making the same point report are tied. The mechanism orders grades by their
locations in quantile space, ranks every contestant in a higher grade above
every contestant in a lower grade, and breaks ties within each grade uniformly.
It then assigns all fixed prizes $\boldsymbol{v}$ assortatively according to
this randomized graded ranking.

The following result shows that $Q^*$ is optimal.

\begin{theorem}[Performance-contingent optimum]\label{thm:performance}
The optimal value under performance-contingent grading with no exclusion is
\begin{equation}\label{eq:Vperf}
\Vperf
=\overline V^{\mathrm P}
=N\int_0^1\phibar(x)\Pbar(x)\dd x-Na v_N.
\end{equation}
This value is attained by $Q^*$.

Set $U(a)=v_N$ and, for ability $t$, require effort
\begin{equation}\label{eq:Estar}
E^*(t)
=t\Qtilde^*(t)-\int_a^t\Qtilde^*(s)\dd s-a v_N,
\qquad
\Qtilde^*(t)=Q^*(F(t)).
\end{equation}
The interim prize rule $Q^*$ is implemented by the graded assortative rule
defined above. Together with the effort schedule $E^*$, this rule is incentive
compatible, satisfies the no-exclusion zero-effort constraint, and maximizes
expected total effort. The lowest type receives utility $v_N$ and exerts
\begin{equation}\label{eq:emin}
E^*(a)=a\bigl(Q^*(0)-v_N\bigr).
\end{equation}
The effort schedule is constant on every nondegenerate grade and strictly
increasing between grades.
\end{theorem}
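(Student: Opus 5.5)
The argument has two halves: an upper bound on the relaxed program \eqref{eq:performanceprogram} obtained by ironing, and a verification that $Q^*$ is generated by the graded assortative rule and attains that bound.

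\textbf{Upper bound.} Fix any nondecreasing $Q\preceq\Pbar$ and write $\Delta(x)=\int_x^1(\Pbar-Q)\dd s$. Majorization gives $\Delta\ge0$ on $[0,1]$ and $\Delta(0)=\Delta(1)=0$. I split
\[
\int_0^1\varphi Q=\int_0^1\phibar Q+\int_0^1(\varphi-\phibar)Q .
\]
For the second term I integrate by parts against $\Phi-G$, which is nonnegative, vanishes at $0$ and $1$, and has derivative $\varphi-\phibar$ almost everywhere:
\[
\int_0^1(\varphi-\phibar)Q=-\int_0^1(\Phi-G)\dd Q\le0,
\]
because $Q$ is nondecreasing. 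For the first term I integrate by parts in $\Delta$:
\[
\int_0^1\phibar(\Pbar-Q)=\int_0^1\Delta\dd\phibar+\phibar(0)\Delta(0)\ge0,
\]
because $\phibar$ is nondecreasing and $\Delta\ge0$. Together these give $\int_0^1\varphi Q\le\int_0^1\phibar\,\Pbar$, so $\overline V^{\mathrm P}\le N\int_0^1\phibar\Pbar-Nav_N$.

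The delicate point is the boundary behaviour. I will take $Q$ right-continuous and use Lebesgue--Stieltjes integration by parts. I also need $\Phi-G$ to vanish at the endpoints, which holds because $G(0)=\Phi(0)$ and $G(1)=\Phi(1)$.

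\textbf{Attainment.} I check that both inequalities hold with equality at $Q^*$.
\begin{itemize}
\item $Q^*$ is constant on each $I_m$, so $\dd Q^*$ puts no mass on $O$. Since $\Phi=G$ off $O$, the first inequality is tight.
\item $\Delta^*$ vanishes off $O$, and at every endpoint $\ell_m,r_m$ because the averages match. The measure $\dd\phibar$ puts no mass inside $O$, since $G$ is affine there. Hence the second inequality is tight as well.
\end{itemize}
Monotonicity of $Q^*$ follows because $\Pbar$ is increasing and averaging on intervals preserves monotonicity.

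\textbf{Realising $Q^*$.} I verify directly that the graded assortative rule produces $Q^*$. A quantile $x$ outside all cells is a point grade, so with probability one it is ranked exactly by quantile and receives $\Pbar(x)$. For $x\in I_m$, the contestant is placed with the others in $I_m$, and those positions are randomised uniformly. Conditional on the profile, her expected prize therefore equals the average of the assortative prizes of the members of her grade. By exchangeability, this equals the expected assortative prize of a type drawn uniformly from $I_m$, which is $\frac{1}{r_m-\ell_m}\int_{\ell_m}^{r_m}\Pbar$.

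\textbf{Incentives and the remaining claims.} Lemma~\ref{lem:IC} with $U(a)=v_N$ gives incentive compatibility and $E^*\ge0$, using $Q^*(0)\ge\Pbar(0)=v_N$. $U$ is nondecreasing, so $U(t)\ge v_N$; this yields the zero-effort constraint, and reporting the lowest type already delivers at least $v_N$. The formula for $E^*(a)$ is immediate.

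The shape of $E^*$ follows from \eqref{eq:envelope}, whose derivative is $t\,\dd\Qtilde^*$. On a grade $\Qtilde^*$ is flat, so $E^*$ is constant there. Between grades $E^*$ increases strictly, because $\Pbar'>0$ and $Q^*$ jumps up at the right end of each cell.

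Finally, $\Vperf\le\overline V^{\mathrm P}$ by \eqref{eq:performanceprogram}. The reverse inequality holds because the direct mechanism above is feasible and attains the bound.
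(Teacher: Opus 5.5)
Your proposal is correct and follows the paper's proof essentially step for step. It uses the same two integrations by parts (against $\Phi-G$ and against the tail gap $\Delta=-T$). It checks tightness at $Q^*$ in the same way: your observation that $\Delta^*$ vanishes off $O$ while $\dd\phibar$ charges only the complement of $O$ is the integrated form of the paper's blockwise computation using that $\phibar$ is constant on each component. It also uses the same block-accounting argument to show that the graded assortative rule induces $Q^*$. As in the paper, the final step $\Vperf\ge\overline V^{\mathrm P}$ formally rests on the contest implementation of Theorem~\ref{thm:implementation}, because a direct mechanism attaining the bound by itself only identifies $\overline V^{\mathrm P}$.
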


To interpret the assignment rule, suppose that a realized grade contains
$k$ contestants and that $h$ contestants belong to strictly higher grades.
The members of this grade then occupy ranks $h+1,\ldots,h+k$ and are ordered
uniformly at random within those ranks. Consequently, each receives a uniform
lottery over the prizes $v_{h+1},\ldots,v_{h+k}$. Thus, the mechanism ignores
reported performance differences within an ironing interval while preserving
the ordering of contestants across different grades.

The proof of Theorem~\ref{thm:performance} has two economically distinct
inequalities. For any feasible $Q$, the virtual-surplus term satisfies
\begin{equation}\label{eq:twostepbound}
\begin{aligned}
\int_0^1\varphi(x)Q(x)\dd x
&=
\int_0^1\phibar(x)Q(x)\dd x
-
\int_{[0,1]}\bigl(\Phi(x)-G(x)\bigr)\dd Q(x)\\
&\leq
\int_0^1\phibar(x)Q(x)\dd x
\leq
\int_0^1\phibar(x)\Pbar(x)\dd x.
\end{aligned}
\end{equation}
The first inequality says that a monotone assignment should be flat where raw
virtual ability must be ironed. The second says that once virtual ability is
monotone, the best feasible assignment is assortative in the majorization
order. The no-exclusion rent term $Na v_N$ is then subtracted from every
feasible design. The rule $Q^*$ makes both inequalities bind.

The construction of the graded assortative rule is equally important. In every
realization, contestants belonging to a grade occupy the same consecutive rank
positions that they occupy under full sorting; only their internal ordering is
randomized. Uniform tie-breaking therefore gives each grade member the
conditional average of the corresponding fixed prize block.

\begin{corollary}[Regular distributions]\label{cor:regular}
If $J$ is nondecreasing, then $O=\varnothing$, $Q^*=\Pbar$, and the optimal
performance design is the ordinary assortative rank rule. The lowest type
exerts
\[
E^*(a)=a\bigl(\Pbar(0)-v_N\bigr)=0,
\]
so no nontrivial failing region is needed.
\end{corollary}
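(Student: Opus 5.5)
The plan is to derive the three claims of Corollary~\ref{cor:regular} in sequence from Theorem~\ref{thm:performance} and the explicit form of $\Pbar$.

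\textbf{Step 1: no ironing.} Suppose $J$ is nondecreasing. Since $q$ is strictly increasing, $\varphi=J\circ q$ is nondecreasing, so $\Phi(x)=\int_0^x\varphi$ is convex. Because $G$ is the largest convex minorant of $\Phi$, this forces $G=\Phi$. Hence $O=\{x:G(x)<\Phi(x)\}=\varnothing$, the index set $\mathcal M$ is empty, and no operational cells $I_m$ exist. (This is the ``if'' direction of the remark after the definition of $O$.)

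\textbf{Step 2: the optimal rule.} With no cells, definition \eqref{eq:Qstar} gives $Q^*(x)=\Pbar(x)$ for every $x$. The graded assortative rule then treats every reported quantile as a point grade. Reports coincide with probability zero, so the rule reduces to ranking by reported type and assigning $v_j$ to rank $j$, i.e.\ the ordinary assortative rank rule. Its optimality follows from Theorem~\ref{thm:performance}, since $\phibar=\varphi$ in this case.

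\textbf{Step 3: the bottom effort.} By \eqref{eq:emin}, $E^*(a)=a(\Pbar(0)-v_N)$, so it suffices to compute $\Pbar(0)=\sum_j v_jp_j(0)$. At $x=0$, $p_j(0)=\binom{N-1}{j-1}0^{N-j}$, which equals $1$ for $j=N$ and $0$ otherwise (using $(1-0)^{j-1}=1$). Thus $\Pbar(0)=v_N$ and $E^*(a)=0$.

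Consequently the lowest type exerts zero effort, and equilibrium efforts start at zero. The failing region $[0,E^*(a))$ is therefore empty, so no nontrivial failing grade is needed.

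The only mildly delicate point is Step 1, namely that a convex $\Phi$ equals its greatest convex minorant. This is immediate because $\Phi$ is itself a convex minorant of $\Phi$, so $\Phi\le G\le\Phi$.
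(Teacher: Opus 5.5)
Your proposal is correct and follows essentially the same route as the paper: a nondecreasing $J$ makes $\varphi$ nondecreasing, so $\Phi$ is convex and equals $G$, whence $O=\varnothing$ and $Q^*=\Pbar$, and $\Pbar(0)=v_N$ gives $E^*(a)=0$. The paper leaves two details implicit: the evaluation of $p_j(0)$, and the fact that the graded rule collapses to ranking by report because ties have probability zero. Your proposal fills in both.
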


\begin{corollary}[Performance-ironed rank coefficients]
\label{cor:performancecoeff}
Define
\begin{equation*}
\overline\beta_j
=N\int_0^1\phibar(x)p_j(x)\dd x
-Na\one\{j=N\}.
\end{equation*}
Then
\[
\Vperf=\sum_{j=1}^{N}\overline\beta_jv_j,
\qquad
\overline\beta_1\geq\cdots\geq\overline\beta_N,
\qquad
\sum_{j=1}^{N}\overline\beta_j=0.
\]
\end{corollary}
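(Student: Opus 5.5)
The plan is to expand $\Pbar(x)=\sum_j v_jp_j(x)$ inside the value formula \eqref{eq:Vperf} of Theorem~\ref{thm:performance}. Writing $Na v_N=\sum_j Na\one\{j=N\}v_j$ and exchanging the finite sum with the integral gives
\[
\Vperf=N\int_0^1\phibar(x)\Pbar(x)\dd x-Nav_N=\sum_{j=1}^N\Bigl(N\int_0^1\phibar(x)p_j(x)\dd x-Na\one\{j=N\}\Bigr)v_j=\sum_{j=1}^N\overline\beta_jv_j .
\]
This settles the first claim.

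For the sum, I will use $\sum_jp_j(x)=1$, which gives $\sum_j\overline\beta_j=N\int_0^1\phibar(x)\dd x-Na$. Since $G$ is convex, it is absolutely continuous on $[0,1]$ with derivative $\phibar$ almost everywhere. Hence $\int_0^1\phibar=G(1)-G(0)$. The minorant touches $\Phi$ at the endpoints, so this equals $\Phi(1)-\Phi(0)=a$ by Lemma~\ref{lem:Phi}. Therefore $\sum_j\overline\beta_j=Na-Na=0$.

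For monotonicity, I will compare consecutive coefficients through the cumulative sums. Define $C_k(x)=\sum_{j=1}^k p_j(x)$, which is the probability of being ranked in the top $k$. This function is nondecreasing in $x$, by the derivative formula \eqref{eq:Pprime} applied to the vector $w=(1,\ldots,1,0,\ldots,0)$ with $k$ ones. A cleaner route is to write $\overline\beta_j-\overline\beta_{j+1}$ for $j\le N-2$ using the identity $p_j-p_{j+1}=-\frac{1}{N}\frac{d}{dx}\binom{N}{j}x^{N-j}(1-x)^{j}$. This identity can be checked from the Beta-function identity. Denote $b_j(x)=\binom{N}{j}x^{N-j}(1-x)^j$, which vanishes at $0$ and $1$ for $1\le j\le N-1$. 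Then $\overline\beta_j-\overline\beta_{j+1}=-\int_0^1\phibar\, b_j'\dd x$. Integrating by parts in the Stieltjes sense, this equals $\int_{[0,1]}b_j\dd\phibar$, and the boundary terms vanish. Since $\phibar$ is nondecreasing, $\dd\phibar\ge0$ and $b_j\ge0$, so the difference is nonnegative.

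The obstacle is the last pair $j=N-1$, where the extra term $+Na$ appears in $\overline\beta_{N-1}-\overline\beta_N$. For this pair, $b_{N-1}(x)=Nx(1-x)^{N-1}$ still vanishes at both ends, and $p_{N-1}-p_N=-\frac1N b_{N-1}'$ still holds. So $\overline\beta_{N-1}-\overline\beta_N=\int b_{N-1}\dd\phibar+Na\ge Na>0$. The identity holds for all $1\le j\le N-1$, so no special boundary issue arises.

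The only delicate points are verifying the identity $p_j-p_{j+1}=-\frac1N b_j'$ and justifying the Stieltjes integration by parts when $\phibar$ may have jumps. For the jumps, $b_j$ is smooth and $\phibar$ is of bounded variation on $[0,1]$ (it is monotone and finite, since $\phibar(0)=G'_+(0)$ is finite because $G\le\Phi$ with $\Phi$ $C^1$ near $0$). So the standard formula applies, and monotonicity follows.
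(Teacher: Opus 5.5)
Your proof is correct. For the value formula and the zero-sum property it coincides with the paper's argument: expand $\Pbar=\sum_j v_jp_j$, then use $\sum_jp_j=1$ with $\int_0^1\phibar=G(1)-G(0)=\Phi(1)-\Phi(0)=a$.

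For monotonicity you take a different route. The paper notes that $Np_j$ is the Beta$(N-j+1,j)$ density and that $p_j/p_{j+1}\propto x/(1-x)$ is increasing. Likelihood-ratio, hence first-order, dominance then makes $N\int_0^1\phibar p_j$ decreasing in $j$ because $\phibar$ is nondecreasing. The term $-Na\one\{j=N\}$ only lowers the last coefficient.

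You instead make this dominance explicit through $N(p_j-p_{j+1})=-b_j'$. Indeed $\int_0^x N(p_j-p_{j+1})=-b_j(x)\le 0$ is exactly the first-order dominance statement. Integrating by parts then gives the exact gap formula $\overline\beta_j-\overline\beta_{j+1}=\int b_j\dd\phibar$, plus $Na$ when $j=N-1$.

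The paper's route is shorter and needs only that $\phibar$ is nondecreasing and integrable. Yours costs a Stieltjes integration by parts but yields more. It shows the last gap is at least $Na>0$. It also shows each other gap vanishes only if $\phibar$ is constant, that is, only if $G$ is the chord from $(0,0)$ to $(1,a)$.

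One small correction: your reason for finiteness of $\phibar(0)$ points the wrong way. The facts $G\le\Phi$ and $G(0)=\Phi(0)$ give only the upper bound $G'_+(0)\le\varphi(0)$, which was never in doubt. The needed lower bound comes from maximality of $G$. Let $M=\sup|\varphi|$, which is finite because $q'=1/f(q)$ and $f$ is continuous and positive on $[a,b]$. Then $x\mapsto -Mx$ is a convex minorant of $\Phi$, so $G'_+(0)\ge -M$. Symmetrically, $x\mapsto a-M(1-x)$ gives $\phibar(1)=G'_-(1)\le M$, which you also need to kill the boundary term at $x=1$.
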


The raw rank coefficients $\boldsymbol{\beta}$ average the possibly
nonmonotone $\varphi$, whereas the performance-ironed coefficients average the
monotone $\phibar$. Both coefficient vectors subtract the unavoidable
bottom-prize rent through their final coordinate. Better performance grades
therefore receive better prize blocks even when rank-only PAVA pools several
ordinal ranks.

\subsection{Implementation by effort brackets}\label{subsec:implementation}

A \emph{graded all-pay contest} specifies a minimum passing effort
$e_{\min}\geq0$ and a finite or countable collection of disjoint passing-effort
brackets
$
[\underline e_m,\overline e_m)\subseteq[e_{\min},\infty).
$
Every contestant chooses an effort $e\geq0$. All efforts in
$[0,e_{\min})$ form the \emph{failing grade}; this grade is empty when
$e_{\min}=0$. A contestant in the failing grade is not excluded. If $k$
contestants fail, they occupy the bottom $k$ graded ranks and split the
associated bottom prize block uniformly. Efforts in the same passing bracket
are declared equivalent. Efforts in different brackets, and efforts outside
all brackets, are ordered by their numerical levels. Ties are broken uniformly,
and all $N$ prizes are assigned according to graded rank.

For each operational type grade $I_m$ associated with an ironing component
$(\ell_m,r_m)$, the optimal effort schedule is constant; denote this level by
$\widehat e_m$. Set
$\underline e_m=\widehat e_m$. If $r_m<1$, define
$
\overline e_m
=
\lim_{t\downarrow q(r_m)}E^*(t),
$
where the limit is taken from types immediately above the grade. If $r_m=1$,
set $\overline e_m=\infty$. The associated effort-grade cell is
$
\mathcal E_m=[\underline e_m,\overline e_m).
$
The right endpoint is excluded when $r_m<1$ because it is assigned to the
interval grade or singleton point immediately above $\mathcal E_m$.

Set the minimum passing effort equal to the lowest equilibrium effort,
\begin{equation*}
e_{\min}
=E^*(a)
=a\bigl(Q^*(0)-v_N\bigr).
\end{equation*}
The passing effort space $[e_{\min},\infty)$ is partitioned into the interval
grades $\mathcal E_m$ and singleton point grades. Specifically, every passing
effort level outside the interval grades forms its own singleton grade:
\[
\bigl\{\mathcal E_m:m\in\mathcal M\bigr\}
\;\cup\;
\left\{
\{e\}:
e\in[e_{\min},\infty)
\setminus\bigcup_{m\in\mathcal M}\mathcal E_m
\right\}.
\]
Thus, every passing effort not contained in an interval bracket, including
unused off-equilibrium effort levels, remains fully distinguished. The failing
grade $[0,e_{\min})$ lies below all passing grades. Hence, no effort is
prohibited and no contestant is excluded.

\begin{theorem}[Contest implementation]\label{thm:implementation}
In the graded all-pay contest constructed above, the strategy
$e(t)=E^*(t)$ is a symmetric Bayesian equilibrium. It induces the interim
prize $Q^*$, gives the lowest type utility $v_N$, assigns every prize, and
attains \eqref{eq:Vperf}.
\end{theorem}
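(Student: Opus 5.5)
The plan is to verify that $E^*$ is a best response for every type when rivals use $E^*$, and then read off the remaining claims from Theorem~\ref{thm:performance}. The first task is to describe the equilibrium map from types to graded positions. By Theorem~\ref{thm:performance}, $E^*$ is constant on each nondegenerate grade $I_m$, with value $\widehat e_m=\underline e_m$, and strictly increasing across grades. Hence types in $I_m$ all land in the bracket $\mathcal E_m$. Types outside $\bigcup_m I_m$ choose distinct effort levels: either singleton point grades, or levels at which $E^*$ is strictly increasing. Such efforts cannot fall inside any $\mathcal E_m$. The reason is that $\mathcal E_m$ ends at the right limit $\overline e_m$ of $E^*$ just above $q(r_m)$, and $E^*$ is monotone. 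So the right limit bounds every effort of a type above the grade from below, and every lower type exerts at most $\widehat e_m$.

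Because $e_{\min}=E^*(a)$ and $E^*$ is nondecreasing, nobody fails on path. The graded ranking of rivals' efforts therefore coincides with the graded ranking of their quantiles under the direct graded assortative rule. Ties occur on a null set except within $I_m$, where they are broken uniformly in both. Consequently a truthful type at quantile $x$ receives interim prize $Q^*(x)$, pays cost $E^*(t)/t$, and obtains the payoff of the IC direct mechanism. Every prize is assigned by construction.

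Next comes the deviation analysis. The key step is to show that any effort $e\ge0$ yields, against rivals using $E^*$, an expected prize no larger than what a mimicked type would obtain in the direct mechanism at weakly lower or equal cost. There are four cases.

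\begin{enumerate}[label=(\roman*)]
\item $e=E^*(s)$ for some type $s$. The deviator obtains exactly $\Qtilde^*(s)$ at cost $E^*(s)/t$, and IC from Lemma~\ref{lem:IC} applies.
\item $e\in\mathcal E_m$ with $e>\underline e_m$. The deviator is ranked exactly as a member of grade $m$, so the prize equals $Q^*$ on $I_m$ at a higher cost than $\widehat e_m$. This is dominated by case~(i).
\item $e$ lies in a gap of the range of $E^*$ outside the brackets. The deviator is ranked strictly above all types with lower effort and below all types with higher effort. Since rivals' efforts avoid the gap a.s., the prize equals $\Pbar(x^\dagger)$, where $x^\dagger$ is the quantile boundary of the gap. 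By right-continuity conventions this equals the prize of the type just at the gap's bottom (no grade sits there, since brackets extend to the next used level), and that type pays less. This case also covers $e$ above $\sup E^*$.
\item $e<e_{\min}$. The deviator fails alone a.s., so she receives $v_N$ at cost $e/t\ge0$, with payoff at most $v_N\le U(t)$ by the no-exclusion bound and $U$ nondecreasing.
\end{enumerate}

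The main obstacle is case~(iii) and the interplay with bracket right endpoints. I must check that jumps of $E^*$ occur only at the upper boundary $r_m$ of ironing grades. At those points $\overline e_m$ is defined as the right limit, so a gap effort never strictly beats the grade below it. I also need to check that an effort exactly at $\overline e_m$ is treated as the type just above $r_m$. I would handle this using continuity of $\Pbar$ and of $E^*$ off the grade boundaries, which follows from \eqref{eq:Estar}: $E^*$ jumps only where $\Qtilde^*$ jumps, namely at $q(r_m)$ when $r_m<1$.

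Finally, substituting $U(a)=v_N$ and $Q^*$ into \eqref{eq:directTE} gives \eqref{eq:Vperf}, and \eqref{eq:emin} gives the lowest type's utility of $v_N$.
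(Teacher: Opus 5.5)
Your architecture matches the paper's: the on-path graded ranking reproduces the graded assortative rule, and a case-by-case deviation check follows. Cases (i), (ii), (iv) and the above-$E^*(b)$ case are fine. The gap is in case (iii). It comes from the claim in your last paragraph that $\Qtilde^*$, hence $E^*$, jumps only at $q(r_m)$. That is false: $Q^*$ also jumps up at the left end of every ironing cell with $\ell_m>0$.

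\begin{itemize}
\item Unless $\ell_m$ is the right endpoint of another component, the left limit of $Q^*$ at $\ell_m$ is $\Pbar(\ell_m)$. By contrast, $Q^*(\ell_m)=\mu_m=(r_m-\ell_m)^{-1}\int_{\ell_m}^{r_m}\Pbar>\Pbar(\ell_m)$, because $\Pbar$ is strictly increasing.
\item Since $\dd E^*=t\dd\Qtilde^*$, effort also jumps at $q(\ell_m)$. This leaves an unused interval $(e_m^-,\underline e_m)$ immediately below $\mathcal E_m$, where $e_m^-=\lim_{x\uparrow\ell_m}E^*(q(x))$.
\item Brackets only extend upward from $\widehat e_m$, so this interval lies outside every bracket. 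It is exactly the delicate deviation.
\end{itemize}

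Your treatment of this interval does not work, for two reasons. First, an effort $e'$ there earns $\Pbar(\ell_m)$, the left limit of $Q^*$ at $\ell_m$. The right-continuous value you invoke is instead the grade average $\mu_m$. Second, no type exerts $e_m^-$. A type doing so would force $Q^*$ to be constant on an interval ending at $\ell_m$, i.e.\ an adjacent ironing cell, and that cell's bracket would then cover the gap. So there is no ``type at the gap's bottom'' obtaining the same prize at lower cost.

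The missing idea is the limiting incentive constraint the paper uses. Take $x_n\uparrow\ell_m$. Direct incentive compatibility gives $U^*(t)\ge Q^*(x_n)-E^*(q(x_n))/t$ for every $n$. Letting $n\to\infty$ yields $U^*(t)\ge\Pbar(\ell_m)-e_m^-/t>\Pbar(\ell_m)-e'/t$ for every $e'\in(e_m^-,\underline e_m)$. The unattained endpoint $e'=e_m^-$ is ruled out weakly by the same inequality. When $\ell_m=0$, the region below the bottom bracket is the failing grade, which your case (iv) already handles. With this step added, your case analysis coincides with the paper's proof.
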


The bracket commitment resolves the difficulty created by flat effort
schedules. A contestant cannot move from the bottom to the top of a tie by
adding an arbitrarily small amount of effort: every effort inside the same
bracket receives the same grade. Choosing an attained separating effort is
equivalent to reporting the corresponding type in the direct mechanism, while
choosing a bracket is equivalent to reporting any type in the associated
grade. Efforts strictly inside a bracket are dominated by its lower endpoint.
An effort in an unused gap below a passing bracket yields the limiting prize
from separating reports that approach the grade from below, but at a strictly
higher cost; the corresponding limiting incentive constraint rules it out. An
effort
below $e_{\min}$ receives the failing grade rather than exclusion, and zero
effort is the cheapest way to obtain that grade.

\subsection{The unified ironing principle}

The two regimes differ in what the designer observes and, consequently, in
the domain over which grading can operate. Nevertheless, their solutions
follow the same four steps: express expected total effort as a linear
functional of an interim prize assignment, characterize feasibility from the
fixed prize vector, iron locally misordered incentive returns, and match
better prizes assortatively to the resulting grades. Table~\ref{tab:unified}
summarizes the parallel.

\begin{table}[htbp]
\centering
\small
\caption{The unified structure of the two grading regimes}
\label{tab:unified}
\begin{tabular}{@{}>{\raggedright\arraybackslash}p{0.19\textwidth}
                    >{\raggedright\arraybackslash}p{0.35\textwidth}
                    >{\raggedright\arraybackslash}p{0.35\textwidth}@{}}
\toprule
& Rank-only grading & Performance-contingent grading \\
\midrule
Information
& Ordinal ranking of efforts
& Numerical effort profile \\
Reduced-form assignment
& Expected rank-prize vector $\boldsymbol{w}$
& Interim expected prize $Q$ \\
Feasibility
& Ordered permutahedron $\Wcal^{\downarrow}(\boldsymbol{v})$
& Monotonicity and $Q\preceq\Pbar$ \\
Raw incentive returns
& Rank coefficients $\beta_j$
& Virtual ability $\varphi(x)$ \\
Ironing
& Least concave majorant of cumulative rank coefficients (PAVA)
& Greatest convex minorant of integrated virtual ability \\
Grades
& Blocks of adjacent ranks
& Intervals of adjacent ability quantiles \\
Implementation
& Uniform prize randomization within rank blocks
& Assortative prize blocks, effort brackets, and a failing grade \\
\bottomrule
\end{tabular}
\end{table}

The fixed prize vector therefore plays the same allocative role in both
regimes, whereas the ability distribution determines the marginal incentive
returns and hence the grade boundaries. The resulting partitions need not
coincide: performance information permits absolute standards and effort
brackets that rank-only grading cannot use.

\section{Discussion and conclusion}\label{sec:conclusion}

This paper develops a unified approach to optimal grading with fixed
heterogeneous prizes. In both information regimes, expected total effort is
reduced to a linear assignment objective, locally misordered incentive returns
are ironed, and better prizes are matched assortatively to the resulting
grades. What the designer observes determines the object and domain of
ironing: ordinal rank coefficients under rank-only information and virtual
ability across quantiles under performance information.

When only ranks are observable, an absolute performance standard is not an
admissible instrument. The designer instead irons a finite sequence of rank
coefficients. PAVA identifies an optimal partition of consecutive ranks:
whenever adjacent block means violate the required decreasing order, the
blocks are merged, and the fixed prizes associated with each resulting block
are randomized uniformly among its ranks.

When numerical effort is observable, the designer can choose which absolute
performance distinctions matter while continuing to assign every prize. The
optimal mechanism irons virtual ability in quantile space, creates endogenous
type grades, matches fixed prize blocks assortatively across grades, and
randomizes within a grade. A failing grade below the minimum passing effort
makes the lowest type's utility equal to the unavoidable bottom-prize value
$v_N$. Effort brackets implement the direct mechanism even though the
equilibrium effort schedule may contain flat portions and jumps.

Several extensions remain natural. Risk aversion would make the full
distribution of prizes within a grade relevant rather than only its mean.
Convex effort costs would interact with both the rank and participation
margins. Allowing prizes to be destroyed would replace majorization by
submajorization and make exclusion an additional design instrument. Asymmetric
abilities would require agent-specific rank densities and reduced-form
constraints. The distinction between ironing ordinal rank returns and ironing
local type returns should remain useful in each extension.

\appendix
\section{Proofs for rank-only grading}\label{app:rankproofs}

\subsection{Proof of Lemma~\ref{lem:rankequilibrium}}

\begin{proof}
Fix $\boldsymbol{w}\in\Wcal^{\downarrow}(\boldsymbol{v})$ and write $\Qtilde(t)=P(F(t);\boldsymbol{w})$. Equation \eqref{eq:Pprime} shows that $\Qtilde$ is nondecreasing and is strictly increasing on $(a,b)$ if $\boldsymbol{w}$ is nonconstant.

Suppose rivals use a strictly increasing effort schedule $e(\cdot)$. A type $t$ who mimics type $s$ obtains
$
\Qtilde(s)-\frac{e(s)}{t}.
$
The first-order envelope condition for truthful choice is
$
e'(t)=t\Qtilde'(t).
$
The lowest type obtains the worst-rank value $w_N=\Qtilde(a)$ and thus chooses zero effort. Hence $e(a)=0$. Integrating the differential equation by parts gives
\begin{align*}
e(t)
&=\int_a^t s\dd \Qtilde(s)
=t\Qtilde(t)-a\Qtilde(a)-\int_a^t\Qtilde(s)\dd s\\
&=tP(F(t);\boldsymbol{w})-\int_a^tP(F(s);\boldsymbol{w})\dd s-aw_N,
\end{align*}
which is \eqref{eq:rankeffort}. Since $\dd e=t\dd \Qtilde$, the schedule is strictly increasing when $\boldsymbol{w}$ is nonconstant.

To verify global optimality, let $U(t)=\Qtilde(t)-e(t)/t$ denote the truthful payoff of type $t$, and let type $t$ mimic $s$. Using the effort formula,
\begin{align*}
U(t)-\left(\Qtilde(s)-\frac{e(s)}t\right)
&=\frac1t\int_s^t\bigl(\Qtilde(z)-\Qtilde(s)\bigr)\dd z\ge0.
\end{align*}
Thus truthful effort is optimal.

Expected effort per contestant, where $T\sim F$ denotes the contestant's ability, is
\begin{align*}
\E[e(T)]
&=\int_a^b\left[t\Qtilde(t)-\int_a^t\Qtilde(s)\dd s-aw_N\right]f(t)\dd t\\
&=\int_a^b\bigl[tf(t)-(1-F(t))\bigr]\Qtilde(t)\dd t-aw_N\\
&=\int_a^bJ(t)\Qtilde(t)f(t)\dd t-aw_N\\
&=\int_0^1\varphi(x)P(x;\boldsymbol{w})\dd x-aw_N.
\end{align*}
Multiplying by $N$ and using the linearity of $P$ gives \eqref{eq:rankobjective}--\eqref{eq:beta}. Finally,
\[
\sum_{j=1}^{N}\beta_j
=N\int_0^1\varphi(x)\dd x-Na.
\]
Since \eqref{eq:Phiformula} implies $\int_0^1\varphi=a$, the sum is zero.

If $\boldsymbol{w}$ is constant, the expected prize does not depend on rank, so zero effort is an equilibrium and is optimal for every type. The objective formula gives zero because $\sum_j\beta_j=0$.
\end{proof}

\subsection{Proof of Lemma~\ref{lem:orderedpolytope}}

\begin{proof}
By Rado's theorem on majorization
\citep[Proposition C.1, p.~162]{MarshallOlkinArnold2011},
$\boldsymbol{w}\in\co\Pi(\boldsymbol{v})$ if and only if
$\boldsymbol{w}\prec\boldsymbol{v}$, where
$\boldsymbol{w}\prec\boldsymbol{v}$ means that
$\boldsymbol{w}$ is majorized by $\boldsymbol{v}$.
Since both $\boldsymbol{w}$ and $\boldsymbol{v}$ are weakly decreasing, this is
equivalent to $S_k(\boldsymbol{w})\leq V_k$ for
$k=1,\ldots,N-1$ and $S_N(\boldsymbol{w})=V_N$, which proves the first
statement.

We next characterize the extreme points. We first claim that every adjacent coarsening belongs to
$\Wcal^{\downarrow}(\boldsymbol{v})$. To see this, consider an adjacent coarsening associated
with a consecutive partition $I_1,\ldots,I_L$ of $\{1,\ldots,N\}$. For each
block $I_\ell$, let $m_\ell=|I_\ell|$ and define the $m_\ell\times m_\ell$
averaging matrix
$
A_\ell=\frac{1}{m_\ell}\boldsymbol{1}_{m_\ell}\boldsymbol{1}_{m_\ell}^{T}.
$
Every entry of $A_\ell$ is $1/m_\ell$, so each of its rows and columns sums to
one. Hence $A_\ell$ is doubly stochastic. Let
$A=\operatorname{diag}(A_1,\ldots,A_L)$ be the corresponding block-diagonal
matrix. Then $A$ is also doubly stochastic, and the prize vector $\boldsymbol{w}$ corresponding to the adjacent coarsening can be
written as
$
\boldsymbol{w}=\boldsymbol{v}A.
$
Indeed, multiplying by $A_\ell$ replaces every coordinate of
$\boldsymbol{v}$ in block $I_\ell$ by the average of the coordinates in that
block. Because $A$ is doubly stochastic, $\boldsymbol{v}A$ is majorized by
$\boldsymbol{v}$ and therefore belongs to $\co\Pi(\boldsymbol{v})$ by Rado's
theorem. Moreover, since $\boldsymbol{v}$ is weakly decreasing and the blocks
are consecutive, the average prize in an earlier block is weakly larger than
the average prize in any later block. Thus $\boldsymbol{w}$ is weakly
decreasing. Finally, averaging within blocks preserves the total sum of the
prizes. Hence every adjacent coarsening belongs to
$\Wcal^{\downarrow}(\boldsymbol{v})$.

We next show that the extreme points of
$\Wcal^{\downarrow}(\boldsymbol{v})$ are exactly the adjacent coarsenings of
$\boldsymbol{v}$.
\paragraph{Step 1: every adjacent coarsening is an extreme point.}
Let $\boldsymbol{w}$ be an adjacent coarsening associated with the
consecutive partition
\[
I_\ell=\{k_{\ell-1}+1,\ldots,k_\ell\},
\qquad
0=k_0<k_1<\cdots<k_L=N.
\]
We have already shown that
$\boldsymbol{w}\in\Wcal^{\downarrow}(\boldsymbol{v})$. To prove that it is an
extreme point, suppose that
\[
\boldsymbol{w}
=
\lambda_1\boldsymbol{w}^{+}
+
\lambda_2\boldsymbol{w}^{-},
\qquad
\boldsymbol{w}^{+},\boldsymbol{w}^{-}
\in\Wcal^{\downarrow}(\boldsymbol{v}),
\]
where $\lambda_1,\lambda_2>0$ and $\lambda_1+\lambda_2=1$.
At the endpoint $k_\ell$ of each block, the adjacent coarsening preserves the
sum of the corresponding original prizes. Hence
\[
S_{k_\ell}(\boldsymbol{w})
=
\sum_{h=1}^{\ell}\sum_{j\in I_h}w_j
=
\sum_{h=1}^{\ell}\sum_{j\in I_h}v_j
=
V_{k_\ell}.
\]
It follows that
$
V_{k_\ell}
=
\lambda_1S_{k_\ell}(\boldsymbol{w}^{+})
+
\lambda_2S_{k_\ell}(\boldsymbol{w}^{-}).
$
Feasibility gives
$S_{k_\ell}(\boldsymbol{w}^{+})\leq V_{k_\ell}$ and
$S_{k_\ell}(\boldsymbol{w}^{-})\leq V_{k_\ell}$. Since both weights are
strictly positive, both inequalities must bind:
$
S_{k_\ell}(\boldsymbol{w}^{+})
=
S_{k_\ell}(\boldsymbol{w}^{-})
=
S_{k_\ell}(\boldsymbol{w})
=
V_{k_\ell}.
$
Subtracting the equalities at two successive block endpoints
$k_{\ell-1}$ and $k_\ell$ yields
$
\sum_{j\in I_\ell}w_j^{+}
=
\sum_{j\in I_\ell}w_j^{-}
=
\sum_{j\in I_\ell}w_j.
$
Thus, the three vectors have the same total prize within every block.

Now fix a block $I_\ell$.\footnote{If $I_\ell=\{j\}$ is a singleton, equality
	of the block sums immediately gives
	$w_j^{+}=w_j^{-}=w_j$, or equivalently $d_j=0$. The argument below is therefore
	needed only when $I_\ell$ contains at least two ranks.} Since $\boldsymbol{w}$ is constant on this block,
define
$
d_j=w_j^{+}-w_j$, $j\in I_\ell.
$
The convex-combination identity implies
$
w_j^{-}
=
w_j-\frac{\lambda_1}{\lambda_2}d_j.
$
For two consecutive ranks $j,j+1\in I_\ell$, monotonicity of
$\boldsymbol{w}^{+}$ gives
$
w_j+d_j\geq w_{j+1}+d_{j+1}.
$
Because $w_j=w_{j+1}$ within the block, this implies
$d_j\geq d_{j+1}$. Similarly, monotonicity of
$\boldsymbol{w}^{-}$ implies
\[
w_j-\frac{\lambda_1}{\lambda_2}d_j
\geq
w_{j+1}-\frac{\lambda_1}{\lambda_2}d_{j+1},
\]
and hence $d_j\leq d_{j+1}$. Therefore, $d_j=d_{j+1}$, so the deviation
$d_j$ is constant throughout $I_\ell$.

Moreover, equality of the block sums gives
$
\sum_{j\in I_\ell}d_j
=
\sum_{j\in I_\ell}(w_j^{+}-w_j)
=
0.
$
Since the deviations are constant within the block, their common value must
be zero. Applying this argument to every block gives
$\boldsymbol{w}^{+}=\boldsymbol{w}^{-}=\boldsymbol{w}$. Thus, every adjacent
coarsening is an extreme point.

\paragraph{Step 2: every extreme point is an adjacent coarsening of $\boldsymbol{v}$.}
Conversely, let $\boldsymbol{w}$ be an extreme point of
$\Wcal^{\downarrow}(\boldsymbol{v})$, and define the set of binding cumulative
constraints by
$D=\{k\in\{0,\ldots,N\}:S_k(\boldsymbol{w})=V_k\}$, where
$S_0(\boldsymbol{w})=V_0=0$. Notice that $0,N\in D$. We claim that
$\boldsymbol{w}$ is constant between every two consecutive elements of $D$.

Let $r<s$ be consecutive elements of $D$. If $s=r+1$, the set $I=\{r+1,\ldots,s\}$ contains
only one coordinate (i.e., the $s$th coordinate) and $\boldsymbol{w}$ is therefore trivially constant on $I$. Suppose instead that
$s\geq r+2$ and, contrary to the claim, that $\boldsymbol{w}$ is not constant
on $I=\{r+1,\ldots,s\}$. Because $r$ and $s$ are consecutive binding
indices, every cumulative constraint strictly between them is slack:
$S_k(\boldsymbol{w})<V_k$ for $k=r+1,\ldots,s-1$.

Partition the sequence $(w_{r+1},\ldots,w_s)$ into its maximal constant runs. A constant run is a consecutive set of coordinates on which
$\boldsymbol{w}$ takes the same value, and it is maximal if it cannot be
extended to the left or right while preserving that value. Let
$A=\{r+1,\ldots,p\}$ and $C=\{p',\ldots,s\}$ be the first and last such runs,
respectively. A run may consist of a single coordinate; the argument below
applies without change because then its cardinality is one. Since
$\boldsymbol{w}$ is not constant on $I$, there are at least two runs, so
$A$ and $C$ are disjoint.

For ease of exposition, we refer to the boundary between two consecutive
constant runs as a \emph{junction}. If
one run ends at rank $m$ and the next begins at rank $m+1$, then maximality
and the weak decrease of $\boldsymbol{w}$ imply $w_m>w_{m+1}$. Thus, every
junction between constant runs inside $I$ is a strict decrease.

The two outer boundaries of $I$ are also strict whenever they exist. Suppose
first that $r>0$. Since $\boldsymbol{w}$ is not constant on $I$, we have
$r+1<s$, so the cumulative constraint at $r+1$ is strictly slack; that is, $S_{r+1}(\boldsymbol{w})<V_{r+1}$. Moreover,
$S_r(\boldsymbol{w})=V_r$ and
$S_{r-1}(\boldsymbol{w})\leq V_{r-1}$. Therefore,
$w_r=S_r(\boldsymbol{w})-S_{r-1}(\boldsymbol{w})\geq v_r$, whereas
$w_{r+1}=S_{r+1}(\boldsymbol{w})-S_r(\boldsymbol{w})<v_{r+1}$. Since
$v_r\geq v_{r+1}$, it follows that $w_r>w_{r+1}$.

Similarly, suppose that $s<N$. Because $s-1>r$, the cumulative constraint at
$s-1$ is strictly slack; that is, $S_{s-1}(\boldsymbol{w})<V_{s-1}$. Hence
$w_s=S_s(\boldsymbol{w})-S_{s-1}(\boldsymbol{w})>v_s$, while feasibility at
$s+1$ gives
$w_{s+1}=S_{s+1}(\boldsymbol{w})-S_s(\boldsymbol{w})\leq v_{s+1}$. Since
$v_s\geq v_{s+1}$, we obtain $w_s>w_{s+1}$.

Define a perturbation vector $\boldsymbol{d}=(d_1,\ldots,d_N)$ by
\[
d_j=
\begin{cases}
	\frac{1}{|A|}, & j\in A,\\
	-\frac{1}{|C|}, & j\in C,\\
	0, & \text{otherwise}.
\end{cases}
\]
Thus, the perturbation raises every coordinate in the first run by the same
amount $1/|A|$ and lowers every coordinate in the last run by the same amount $1/|C|$. The
normalization ensures that the total increase on $A$ equals the total decrease
on $C$, so $\sum_{j=1}^{N}d_j=0$.

Let $\Delta_k=\sum_{j=1}^{k}d_j$. By construction,
$\Delta_k=0$ for $k\leq r$ and for $k\geq s$, while
$\Delta_k>0$ for every $r<k<s$. In particular, the perturbation changes only
the cumulative sums associated with the strictly slack constraints between
$r$ and $s$.

Fix $\lambda_1,\lambda_2>0$ such that $\lambda_1+\lambda_2=1$. For
$\varepsilon>0$, define
\[
\boldsymbol{w}^{+}
=
\boldsymbol{w}+\lambda_2\varepsilon\boldsymbol{d},
\qquad
\boldsymbol{w}^{-}
=
\boldsymbol{w}-\lambda_1\varepsilon\boldsymbol{d}.
\]
Then
$\lambda_1\boldsymbol{w}^{+}+\lambda_2\boldsymbol{w}^{-}
=\boldsymbol{w}$.

For sufficiently small $\varepsilon$, both perturbed vectors remain weakly
decreasing. Indeed, all coordinates within $A$ receive the same perturbation,
as do all coordinates within $C$, so equality within each constant run is
preserved. The only ordering inequalities that could become tighter occur at
the boundaries of $A$ and $C$. The interior boundaries are strict because
they are junctions between distinct maximal constant runs, and the outer
boundaries are strict by the preceding argument. Hence none of these
inequalities is reversed when $\varepsilon$ is sufficiently small.

The cumulative sums of the perturbed vectors satisfy
$S_k(\boldsymbol{w}^{+})
=S_k(\boldsymbol{w})+\lambda_2\varepsilon\Delta_k$ and
$S_k(\boldsymbol{w}^{-})
=S_k(\boldsymbol{w})-\lambda_1\varepsilon\Delta_k$. For every
$r<k<s$, define the slack in the $k$th cumulative constraint by
$\sigma_k:=V_k-S_k(\boldsymbol{w})$. Because $r$ and $s$ are consecutive
elements of $D$, we have $\sigma_k>0$ for all such $k$. Moreover,
$\Delta_k>0$ on this interval. Hence, by choosing
$\varepsilon\leq\min_{r<k<s}\sigma_k/(\lambda_2\Delta_k)$, we ensure that
$S_k(\boldsymbol{w}^{+})\leq V_k$ for every $r<k<s$. For $k\leq r$ or
$k\geq s$, we have $\Delta_k=0$, so the cumulative sums of
$\boldsymbol{w}^{+}$ are unchanged and remain feasible. Since
$\Delta_k\geq0$ for every $k$, the perturbation defining
$\boldsymbol{w}^{-}$ only weakly decreases its cumulative sums, and therefore
$S_k(\boldsymbol{w}^{-})\leq V_k$ for every $k<N$. Finally,
$\Delta_N=0$, so both perturbed vectors preserve the total-sum condition
$S_N(\boldsymbol{w}^{-})=S_N(\boldsymbol{w}^{+})=V_N$.
Choosing $\varepsilon$ small enough also to preserve the monotonicity
conditions discussed above therefore gives
$\boldsymbol{w}^{+},\boldsymbol{w}^{-}
\in\Wcal^{\downarrow}(\boldsymbol{v})$.

Because $\boldsymbol{d}\neq\boldsymbol{0}$, the vectors
$\boldsymbol{w}^{+}$ and $\boldsymbol{w}^{-}$ are distinct. We have therefore
constructed a nontrivial convex decomposition
$\boldsymbol{w}
=\lambda_1\boldsymbol{w}^{+}+\lambda_2\boldsymbol{w}^{-}$, contradicting the
extremality of $\boldsymbol{w}$. It follows that $\boldsymbol{w}$ must be
constant between every two consecutive elements of $D$.

Now let $r<s$ be consecutive elements of $D$, and let $c$ denote the common
value of $w_{r+1},\ldots,w_s$. Since the cumulative constraints bind at both
endpoints,
$(s-r)c=S_s(\boldsymbol{w})-S_r(\boldsymbol{w})
=V_s-V_r=\sum_{j=r+1}^{s}v_j$. Hence
$c=(s-r)^{-1}\sum_{j=r+1}^{s}v_j$. If $s=r+1$, this simply says that the
singleton block receives $c=v_{r+1}$.

Thus, on every interval between consecutive binding indices in $D$,
$\boldsymbol{w}$ replaces the corresponding coordinates of
$\boldsymbol{v}$ by their block average. Therefore,
$\boldsymbol{w}$ is an adjacent coarsening of $\boldsymbol{v}$. We conclude that the extreme points of
$\Wcal^{\downarrow}(\boldsymbol{v})$ are exactly the adjacent coarsenings of
$\boldsymbol{v}$.

Finally, the characterization in \eqref{eq:majorizationfinite} shows that
$\Wcal^{\downarrow}(\boldsymbol{v})$ is a compact polytope, so it is the
convex hull of its extreme points. Consequently, every feasible expected prize
vector is a convex combination of adjacent coarsenings and can
therefore be implemented by a lottery over fixed rank gradings.
\end{proof}

\subsection{Proof of Theorem~\ref{thm:pava}}

\begin{proof}
Consider any fixed rank grading with $M$ consecutive blocks. Write
$\mathcal I=\{J_1,\ldots,J_M\}$ and choose block boundaries
$0=\tau_0<\tau_1<\cdots<\tau_M=N$ such that
$J_\ell=\{\tau_{\ell-1}+1,\ldots,\tau_\ell\}$ for
$\ell=1,\ldots,M$. Let $\boldsymbol{w}^{\mathcal I}$ be the adjacent
coarsening of $\boldsymbol{v}$ generated by this partition, and let
$\boldsymbol{\beta}^{\mathcal I}$ be obtained by averaging the original
coefficients over the same blocks. Thus, for every $j\in J_\ell$,
$w_j^{\mathcal I}=|J_\ell|^{-1}\sum_{r\in J_\ell}v_r$ and
$\beta_j^{\mathcal I}=|J_\ell|^{-1}\sum_{r\in J_\ell}\beta_r$.
Blockwise averaging then implies that the total effort induced by the prize vector $\boldsymbol{w}^{\mathcal I}$ is
\begin{equation*}
	\boldsymbol{\beta}
	\bigl(\boldsymbol{w}^{\mathcal I}\bigr)^{T}
	=
	\boldsymbol{\beta}^{\mathcal I}\boldsymbol{v}^{T}
	=
	\sum_{J\in\mathcal I}
	\frac{
		\left(\sum_{j\in J}\beta_j\right)
		\left(\sum_{j\in J}v_j\right)
	}{
		|J|
	}.
\end{equation*}

Define the cumulative sums of the block-averaged coefficients by
$B_h^{\mathcal I}=\sum_{j=1}^{h}\beta_j^{\mathcal I}$, with
$B_0^{\mathcal I}=0$. Recall that in the main text, $B_h$ was defined as $\sum_{j=1}^{h}\beta_j$, with $B_0=0$. At every block boundary $\tau_\ell$, averaging preserves
the total sum of the coefficients in all completed blocks. Hence
$B_{\tau_\ell}^{\mathcal I}=B_{\tau_\ell}$ for
$\ell=0,\ldots,M$.

Now fix a block $J_\ell$ and an integer
$h\in\{\tau_{\ell-1},\ldots,\tau_\ell\}$. Since
$\beta_j^{\mathcal I}$ is constant over $J_\ell$, the cumulative sum
$B_h^{\mathcal I}$ varies at a constant rate between the two block
boundaries. More precisely,
\[
B_h^{\mathcal I}
=
\frac{\tau_\ell-h}{\tau_\ell-\tau_{\ell-1}}
B_{\tau_{\ell-1}}
+
\frac{h-\tau_{\ell-1}}{\tau_\ell-\tau_{\ell-1}}
B_{\tau_\ell}.
\]
Thus, $(h,B_h^{\mathcal I})$ lies on the straight line segment joining the
original cumulative-sum points
$(\tau_{\ell-1},B_{\tau_{\ell-1}})$ and
$(\tau_\ell,B_{\tau_\ell})$.

Because $\widehat B$ lies weakly above the original cumulative sums,
$B_{\tau_{\ell-1}}\leq\widehat B(\tau_{\ell-1})$ and
$B_{\tau_\ell}\leq\widehat B(\tau_\ell)$.
Moreover, concavity of $\widehat B$ implies that its graph lies weakly above
the straight line segment joining
$(\tau_{\ell-1},\widehat B(\tau_{\ell-1}))$ and
$(\tau_\ell,\widehat B(\tau_\ell))$. Therefore, for every integer
$h\in\{\tau_{\ell-1},\ldots,\tau_\ell\}$,
\[
B_h^{\mathcal I}
\leq
\frac{\tau_\ell-h}{\tau_\ell-\tau_{\ell-1}}
\widehat B(\tau_{\ell-1})
+
\frac{h-\tau_{\ell-1}}{\tau_\ell-\tau_{\ell-1}}
\widehat B(\tau_\ell)
\leq
\widehat B(h).
\]

Abel summation and the monotonicity of $\boldsymbol{v}$ imply
\begin{align*}
\boldsymbol{\beta}^{\mathcal I}\boldsymbol{v}^{T}
=\sum_{k=1}^{N-1}B_k^{\mathcal I}(v_k-v_{k+1})+B_N^{\mathcal I}v_N
\leq\sum_{k=1}^{N-1}\widehat B(k)(v_k-v_{k+1})+\widehat B(N)v_N
=\widehat{\boldsymbol{\beta}}\boldsymbol{v}^{T}.
\end{align*}

For the PAVA partition, the block boundaries are $0$, $N$, and the successive
kinks of $\widehat B$. These boundaries are contact points, and $\widehat B$
is affine between any two successive boundaries. Therefore, the average of
the original coefficients $\beta_j$ within each PAVA block equals the common
slope of $\widehat B$ on that block. Hence
$\boldsymbol{\beta}^{\mathrm{PAVA}}
=\widehat{\boldsymbol{\beta}}$. By the blockwise exchange identity,
\[
\boldsymbol{\beta}
\bigl(\boldsymbol{w}^{\mathrm{PAVA}}\bigr)^{T}
=
\boldsymbol{\beta}^{\mathrm{PAVA}}\boldsymbol{v}^{T}
=
\widehat{\boldsymbol{\beta}}\boldsymbol{v}^{T}.
\]

Lemma~\ref{lem:orderedpolytope} shows that every feasible expected prize vector
is a convex combination of adjacent coarsenings. Since every
such coarsening satisfies the upper bound derived above and the objective is
linear, the same bound applies to every
$\boldsymbol{w}\in\Wcal^{\downarrow}(\boldsymbol{v})$. Thus,
$\boldsymbol{w}^{\mathrm{PAVA}}$ is optimal, and the stated value formula
follows from the blockwise exchange identity.
\end{proof}

\section{Proofs for performance-contingent grading}\label{app:performanceproofs}

\subsection{Proof of Lemma~\ref{lem:continuousfeas}}

\begin{proof}
Monotonicity of $Q$ follows from incentive compatibility by
Lemma~\ref{lem:IC}. Fix $x\in[0,1]$ and let $S$ be the set of contestants
whose quantiles exceed $x$, with $K=|S|$. In every realization, the contestants
in $S$ receive $K$ prizes, whose total value cannot exceed the sum of the $K$
highest prizes. Hence, by symmetry,
\[
N\int_x^1 Q(s)\dd s
=
\E\left[\sum_{i\in S}\operatorname{prize}_i\right]
\leq
\E\left[\sum_{j=1}^{K}v_j\right].
\]
Under the fully assortative assignment, the contestants in $S$ are precisely
the $K$ highest-ranked contestants and therefore receive exactly the prizes
$v_1,\ldots,v_K$. Since their interim expected prize is $\Pbar$, symmetry also
gives
\[
\E\left[\sum_{j=1}^{K}v_j\right]
=
N\int_x^1\Pbar(s)\dd s.
\]
Combining the two relations proves the tail inequalities. At $x=0$, all
contestants are included and every prize is assigned, so
\[
\int_0^1 Q(s)\dd s
=
\int_0^1\Pbar(s)\dd s
=
\frac{1}{N}\sum_{j=1}^{N}v_j.
\]
Thus, $Q\preceq\Pbar$.
\end{proof}

\subsection{Proof of Lemma~\ref{lem:Phi}}

\begin{proof}
The function $a-(1-x)q(x)$ equals zero at $x=0$ and has derivative
\[
q(x)-(1-x)q'(x)=\varphi(x).
\]
It therefore equals $\int_0^x\varphi(s)\dd s$. At $x=1$ it equals $a$.
\end{proof}

\subsection{Proof of Theorem~\ref{thm:performance}}

\begin{proof}
\emph{Step 1: the upper bound.}
Let $Q$ satisfy the constraints in \eqref{eq:performanceprogram}. The tail
constraints and monotonicity imply the coordinate bounds
$v_N\leq Q(x)\leq v_1$ for every interior $x$. Indeed, subtracting a tail
inequality from the total-integral equality gives
\[
\int_0^xQ(s)\dd s
\geq
\int_0^x\Pbar(s)\dd s
\geq xv_N,
\]
and monotonicity gives
$xQ(x)\geq\int_0^xQ(s)\dd s$. The upper bound follows from
\[
(1-x)Q(x)
\leq
\int_x^1Q(s)\dd s
\leq
\int_x^1\Pbar(s)\dd s
\leq
(1-x)v_1.
\]
Thus, $Q$ is bounded and of bounded variation.

The function $\Phi-G$ is continuous, nonnegative, vanishes at both endpoints,
and has almost-everywhere derivative $\varphi-\phibar$. Integration by parts
gives
\begin{equation}\label{eq:firstboundproof}
\int_0^1(\varphi(x)-\phibar(x))Q(x)\dd x
=-\int_{[0,1]}(\Phi(x)-G(x))\dd Q(x)
\leq0.
\end{equation}

Define
\[
T(u):=\int_u^1\bigl(Q(x)-\Pbar(x)\bigr)\dd x.
\]
Since $Q\preceq\Pbar$, we have $T(u)\leq0$ for every $u\in[0,1]$.
Equality of the total integrals implies $T(0)=T(1)=0$, and
$T'(u)=-(Q(u)-\Pbar(u))$ almost everywhere. Integration by parts therefore
gives
\begin{equation}\label{eq:secondboundproof}
\int_0^1\phibar(x)\bigl(Q(x)-\Pbar(x)\bigr)\dd x
=
\int_{(0,1]}T(u)\dd\phibar(u)
\leq0.
\end{equation}
The inequality follows because $\dd\phibar$ is a nonnegative measure while
$T\leq0$. Combining \eqref{eq:firstboundproof} and
\eqref{eq:secondboundproof} yields
\begin{equation*}
	\begin{aligned}
		\int_0^1 \varphi(x)Q(x)\dd x
		\leq
		\int_0^1 \phibar(x)Q(x)\dd x
		\leq
		\int_0^1 \phibar(x)\Pbar(x)\dd x,
	\end{aligned}
\end{equation*}
which is \eqref{eq:twostepbound}.

Every no-exclusion contest outcome also satisfies $U(a)\geq v_N$.
Consequently, \eqref{eq:directTE} and \eqref{eq:twostepbound} imply
\[
\mathrm{TE}
\leq
N\int_0^1\phibar(x)\Pbar(x)\dd x-Na v_N.
\]

\emph{Step 2: $Q^*$ achieves the upper bound.}
For each ironing component $(\ell_m,r_m)$, monotonicity of $\Pbar$ implies
\[
\Pbar(\ell_m)
\leq
\frac{1}{r_m-\ell_m}
\int_{\ell_m}^{r_m}\Pbar(x)\dd x
\leq
\Pbar(r_m).
\]
The averages associated with successively higher grade cells are therefore
ordered, and they fit monotonically between the point-grade values of
$\Pbar$. Hence $Q^*$ is nondecreasing and
$Q^*(0)\geq\Pbar(0)=v_N$.

The first inequality in \eqref{eq:twostepbound} binds when $Q=Q^*$. Indeed,
$Q^*$ is constant on the interior of every ironing component, so the Stieltjes
measure $\dd Q^*$ assigns no mass there. Outside the ironing set, and at the
endpoints of every ironing component, $\Phi=G$. Therefore,
\[
\int_{[0,1]}(\Phi(x)-G(x))\dd Q^*(x)=0.
\]

The second inequality in \eqref{eq:twostepbound} also binds when $Q=Q^*$. For each ironing component
$(\ell_m,r_m)$, define
\[
\mu_m
:=
\frac{1}{r_m-\ell_m}
\int_{\ell_m}^{r_m}\Pbar(x)\dd x.
\]
By construction, $Q^*(x)=\mu_m$ almost everywhere on that component, and hence
\[
\int_{\ell_m}^{r_m}
\bigl(Q^*(x)-\Pbar(x)\bigr)\dd x
=
(r_m-\ell_m)\mu_m
-
\int_{\ell_m}^{r_m}\Pbar(x)\dd x
=
0.
\]
Moreover, $Q^*=\Pbar$ outside the operational grade cells. Since $\phibar$ is constant
on each ironing component, writing $\phibar_m$ for its value on $I_m$ gives
\[
\begin{aligned}
	\int_0^1
	\phibar(x)\bigl(Q^*(x)-\Pbar(x)\bigr)\dd x
	=
	\sum_{m\in\mathcal M}
	\phibar_m
	\int_{\ell_m}^{r_m}
	\bigl(Q^*(x)-\Pbar(x)\bigr)\dd x
	=0.
\end{aligned}
\]
Thus, both inequalities in \eqref{eq:twostepbound} bind at $Q^*$. Setting $U(a)=v_N$ gives
\[
\overline V^{\mathrm P}
=N\int_0^1\phibar(x)\Pbar(x)\dd x-Na v_N.
\]

\emph{Step 3: implementation of $Q^*$.}
Consider the graded assortative assignment in the theorem. If a contestant's
quantile $x$ lies outside $\bigcup_m I_m$, then $x$ forms a singleton point grade.
Since rival quantiles are independently drawn from a continuous distribution,
almost surely no rival makes the same point report. The number of rivals in
higher grades therefore equals the number of rivals with quantiles above $x$,
so the contestant has the same rank distribution as under the fully
assortative assignment. Her interim expected prize is consequently $\Pbar(x)$.

Now fix an operational ironing grade $I_m$ and let $\mu$ denote the common
interim expected prize of a contestant whose quantile lies in that grade. In
any realized profile, if $k$ contestants belong to the grade and $h$
contestants belong to higher grades, the grade collectively occupies ranks
$h+1,\ldots,h+k$, exactly as under full sorting. Uniform tie-breaking changes
only the internal assignment of the corresponding prize block. Taking
expectations gives
\[
N(r_m-\ell_m)\mu
=N\int_{\ell_m}^{r_m}\Pbar(s)\dd s,
\]
so
\[
\mu
=\frac{1}{r_m-\ell_m}
\int_{\ell_m}^{r_m}\Pbar(s)\dd s.
\]
Thus, the graded assortative rule induces $Q^*$.

Finally, as shown in Step 2, $Q^*$ is nondecreasing and
$Q^*(0)\geq v_N$. With $U(a)=v_N$,
Lemma~\ref{lem:IC} gives incentive compatibility, and
\[
E^*(a)=a\bigl(Q^*(0)-v_N\bigr)\geq0,
\qquad
\dd E^*(t)=t\dd\Qtilde^*(t)\geq0.
\]
Hence the effort schedule is nonnegative. On every ironing grade,
$\Qtilde^*$ and $E^*$ are constant. Across distinct grades,
$\Qtilde^*$ is strictly increasing because $\Pbar$ is strictly increasing;
therefore, $E^*$ is also strictly increasing. The induced expected total effort is exactly
\eqref{eq:Vperf}.
\end{proof}

\subsection{Proofs of Corollaries~\ref{cor:regular}
and \ref{cor:performancecoeff}}

\begin{proof}
If $J$ is nondecreasing, then $\varphi$ is nondecreasing, $\Phi$ is convex,
and $G=\Phi$. There is no ironing, so $Q^*=\Pbar$. Since
$\Pbar(0)=v_N$, the lowest equilibrium effort is
$
E^*(a)=a\bigl(\Pbar(0)-v_N\bigr)=0.
$

For the coefficient result, define
$
\gamma_j=N\int_0^1\phibar(x)p_j(x)\dd x.
$
The normalized density $Np_j$ is the Beta$(N-j+1,j)$ density. The likelihood ratio
$p_j/p_{j+1}$ is proportional to $x/(1-x)$ and is increasing, so the rank
densities are ordered by the monotone likelihood-ratio order, and thus by first-order stochastic dominance, from $j=1$ to
$j=N$. Since $\phibar$ is nondecreasing,
$\gamma_1\geq\cdots\geq\gamma_N$. By definition,
$
\overline\beta_j
=\gamma_j-Na\one\{j=N\},
$
so $\overline\beta_1\geq\cdots\geq\overline\beta_N$. Finally,
\[
\sum_{j=1}^{N}\overline\beta_j
=N\int_0^1\phibar(x)\dd x-Na
=N\bigl(G(1)-G(0)\bigr)-Na
=0.
\]
Substituting \eqref{eq:Pbar} into \eqref{eq:Vperf} yields
$\Vperf=\sum_j\overline\beta_jv_j$.
\end{proof}

\subsection{Proof of Theorem~\ref{thm:implementation}}

\begin{proof}
Suppose each rival of type $s$ chooses $E^*(s)$. By construction, the graded
ranking of equilibrium efforts coincides with the ranking obtained by pooling
abilities whose quantiles lie in the same ironing grade and ordering the
resulting grades by location. The induced interim prize is $Q^*$. The lowest
type's utility is
\[
Q^*(0)-\frac{a(Q^*(0)-v_N)}{a}=v_N.
\]

Consider a deviation by type $t$ to a passing effort $e'\geq e_{\min}$. If
$e'$ lies in a passing grade bracket, it produces the same graded rank as the
bracket's lower endpoint at a weakly higher cost. The lower endpoint equals
$E^*(s)$ for every type $s$ in that grade, so the deviation payoff is at most
the direct-mechanism payoff from reporting $s$, which is no greater than
the truthful utility $U^*(t)=\Qtilde^*(t)-E^*(t)/t$.

If $e'$ lies outside all brackets and is attained by a separating type $s$,
the deviation is exactly equivalent to reporting $s$ in the direct mechanism.
Consider next a possible unused gap immediately below a grade bracket whose
lower quantile endpoint satisfies $\ell_m>0$. Let $\underline e_m$ be the
lower endpoint of the bracket and define
\[
e_m^-
=
\lim_{x\uparrow\ell_m}E^*(q(x)),
\qquad
Q_m^-
=
\lim_{x\uparrow\ell_m}Q^*(x).
\]
If $e_m^-<\underline e_m$, any effort
$e'\in(e_m^-,\underline e_m)$ induces the limiting prize $Q_m^-$. For any
sequence $x_n\uparrow\ell_m$, direct incentive compatibility gives
\[
U^*(t)
\geq
Q^*(x_n)-\frac{E^*(q(x_n))}{t}.
\]
Taking limits yields
\[
U^*(t)
\geq
Q_m^- -\frac{e_m^-}{t}
>
Q_m^- -\frac{e'}{t}.
\]
If $e_m^-$ is itself unattained, the same limiting inequality weakly rules out
$e'=e_m^-$; if it is attained, that effort is already covered by the
separating-report case. When $\ell_m=0$, the interval below the bottom passing
bracket is the failing grade and is handled below.

Every unused effort above a pooled grade and below the next attained effort is
included in that grade's bracket. It therefore gives the same prize as the
lower endpoint at a strictly higher cost. If the highest grade is pooled, its
bracket is unbounded above, so every higher effort is similarly dominated. If
the top of the type distribution is separating, choosing more than $E^*(b)$
gives no better rank than $E^*(b)$ and costs strictly more.

Finally, consider $0\leq e'<e_{\min}$. When all rivals follow the equilibrium
strategy, the deviator is almost surely the unique contestant in the failing
grade and therefore receives the bottom prize $v_N$. Her deviation payoff is
\[
v_N-\frac{e'}{t}\leq v_N.
\]
Under the direct mechanism, $U^*(a)=v_N$ and
$U^{*\prime}(t)=E^*(t)/t^2\geq0$ almost everywhere, so
$U^*(t)\geq v_N$ for every type. Thus, failure is not profitable, and zero
effort is the cheapest action within the failing grade. These cases exhaust
all unilateral deviations.
\end{proof}

\IfFileExists{econ.bst}
  {\bibliographystyle{econ}}
  {\bibliographystyle{plainnat}}
\bibliography{bib_grading}

\end{document}